\theoremstyle{plain}%
\newtheorem{theorem}{Theorem}
\newtheorem{proposition}[theorem]{Proposition}
\newtheorem{lemma}[theorem]{Lemma}%
\newtheorem{corollary}[theorem]{Corollary}%
\theoremstyle{definition}%
\newtheorem{remark}[theorem]{Remark}%
\newtheorem{definition}[theorem]{Definition}%
\DeclareMathOperator{\Fac}{\mathcal{L}}
\DeclareMathOperator{\N}{\mathbb{N}}
\DeclareMathOperator{\Z}{\mathbb{Z}}
\newcommand{\infw}[1]{\mathbf{#1}}
\newcommand{\bc}[2]{\mathsf{b}_{#1}^{(#2)}}
\newcommand{\ac}[1]{\mathsf{a}_{#1}}
\DeclareMathOperator{\pref}{pref}
\DeclareMathOperator{\rep}{rep}
\DeclareMathOperator{\NE}{ne}
\DeclareMathOperator{\PR}{pr}
\DeclareMathOperator{\fact}{Fac}
\definecolor{lime}{HTML}{A6CE39}
\DeclareRobustCommand{\orcidicon}{
	\begin{tikzpicture}
	\draw[lime, fill=lime] (0,0) 
	circle [radius=0.16] 
	node[white] {{\fontfamily{qag}\selectfont \tiny ID}};
	\draw[white, fill=white] (-0.0625,0.095) 
	circle [radius=0.007];
	\end{tikzpicture}
	\hspace{-2mm}
}
\begin{document}

\title{Automatic Abelian Complexities of Parikh-Collinear Fixed Points}


\author{\fnm{Michel} \sur{Rigo}\orcidR{}}\email{m.rigo@uliege.be}

\author{\fnm{Manon} \sur{Stipulanti}\orcidS{}}\email{m.stipulanti@uliege.be}

\author{\fnm{Markus A.} \sur{Whiteland}\orcidW{}}\email{mwhiteland@uliege.be}

\affil{\orgdiv{Department of Mathematics}, \orgname{ULi\`ege}, \orgaddress{\street{All\'ee de la D\'ecouverte 12}, \city{Li\`ege}, \postcode{4000},  \country{Belgium}}}


\abstract{
Parikh-collinear morphisms have the property that all the Parikh vectors of the images of letters are collinear, i.e., the associated adjacency matrix has rank~$1$. In the conference DLT-WORDS 2023 we showed that fixed points of Parikh-collinear morphisms are automatic. We also showed that the abelian complexity function of a binary fixed point of such a morphism is automatic under some assumptions.
In this note, we fully generalize the latter result. Namely, we show that the abelian complexity function of a fixed point of an arbitrary, possibly erasing, Parikh-collinear morphism is automatic. Furthermore, a deterministic finite automaton with output generating this abelian complexity function is provided by an effective procedure. To that end, we discuss the constant of recognizability of a morphism and the related cutting set.
}

\keywords{Parikh-collinear morphism, recognizable morphism, automatic sequence, abelian complexity, substitution shift, automated theorem proving}


\pacs[MSC Classification]{68Q45, 11B85, 68V15}

\maketitle

\section{Introduction}
This paper is an extension of the results in our previous work \cite{RSW-DLT23} that was presented during the joint DLT-WORDS 2023 conference. The main objects of interest are fixed points of Parikh-collinear morphisms which are defined as follows. It is assumed that the alphabet $A=\{a_1<\cdots<a_k\}$ is ordered and $\Psi(w)$ denotes the \emph{abelianization} or \emph{Parikh vector} $(|w|_{a_1},\ldots,|w|_{a_k})$ counting the number of different letters constituting the word $w\in A^*$. A morphism $f \colon A^* \to B^*$ is \emph{Parikh-collinear} if the Parikh vectors $\Psi(f(b))$, $b\in A$, are collinear (or pairwise $\Z$-linearly dependent).

Parikh-collinear morphisms have received some attention in recent years. The authors of
\cite[Sec.~4]{AlloucheDekQue2021} list a dozen of fixed points of Parikh-collinear morphisms appearing in the OEIS~\cite{Sloane}, e.g., \href{https://www.oeis.org/A285249}{A285249}. Cassaigne {\em et al.}~characterized Parikh-collinear morphisms as those morphisms that map all words to words with bounded abelian complexity \cite{CassaigneRichomeSaariZamboni2011}. These morphisms also provide infinite words with interesting properties with respect to the so-called $k$-binomial equivalence~$\sim_k$. Two words $u, v\in A^*$ are \emph{$k$-binomially equivalent} if $\binom{u}{x} = \binom{v}{x}$, for all $x\in A^*$ with $|x|\le k$. Recall that a binomial coefficient $\binom{u}{x}$ counts the number of times $x$ occurs as a subword of $u$. The \emph{$k$-binomial complexity function} of an infinite word~$\infw{x}$ introduced in \cite{RigoSalimov2015} is defined as $\bc{\infw{x}}{k} \colon \N \to \N$, $n\mapsto \#(\Fac_n(\infw{x})/{\sim_k})$, i.e., length-$n$ factors in~$\infw{x}$ are counted up to $k$-binomial equivalence. (Here $\bc{\infw{x}}{1}$ is the usual abelian complexity function \cite{Erdos1957}.) For a survey on abelian properties of words, see \cite{FiciPuzynina}. In a recent work, we showed that a morphism is Parikh-collinear if and only if it maps all words with bounded $k$-binomial complexity to words with bounded $(k+1)$-binomial complexity (for all $k$) \cite{RSW2}. Thus each fixed point of a Parikh-collinear morphism has a bounded $k$-binomial complexity for all $k$ (and in particular a bounded abelian complexity).

Let us summarize the contributions from~\cite{RSW-DLT23} which are twofold. To explain them, we assume the reader is familiar with automatic sequences (see, e.g., \cite{AS,Shallit2022logical}) and we introduce some terminology.
For an arbitrary morphism $\sigma \colon A^* \to A^*$, we let $M_\sigma\in\mathbb{N}^{A\times A}$ denote its {\em adjacency matrix},  where $[M_\sigma]_{b,c}=|\sigma(c)|_b$ for all $b,c\in A$. A letter $a \in A$ is called \emph{mortal} if $\sigma^n(a) = \varepsilon$ for some $n \geq 1$. 
If $a$ is not mortal, we call it \emph{immortal}.

\begin{lemma}\label{lem:eigenvaluevector}
Let $f \colon A^* \to A^*$ be Parikh-collinear and $a \in A$ be immortal.
Then $\Psi(f(a))$ is an eigenvector of $M_f$ associated with the eigenvalue
$\sum_{b \in A}|f(b)|_b$.
\end{lemma}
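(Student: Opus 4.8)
The plan is to exploit the fact that, by Parikh-collinearity, the adjacency matrix $M_f$ has rank at most one: the column of $M_f$ indexed by a letter $c$ is exactly $(|f(c)|_b)_{b\in A}=\Psi(f(c))$, and all of these Parikh vectors are scalar multiples of a single one. Concretely, I would set $\mathbf{v}:=\Psi(f(a))$. Since $a$ is immortal we have $f(a)\neq\varepsilon$, hence $\mathbf{v}\neq\mathbf{0}$, so $\mathbf{v}$ is a genuine candidate for an eigenvector; and for each $b\in A$ collinearity provides a scalar $\lambda_b\in\mathbb{Q}$ with $\Psi(f(b))=\lambda_b\,\mathbf{v}$ (so $\lambda_a=1$, and $\lambda_b=0$ exactly when $b$ is mortal). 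In coordinates this records $|f(b)|_c=\lambda_b\,|f(a)|_c$ for all $b,c\in A$.

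From here I would simply compute $M_f\mathbf{v}$ coordinatewise: for each $b\in A$,
\[
(M_f\mathbf{v})_b=\sum_{c\in A}|f(c)|_b\,|f(a)|_c=|f(a)|_b\sum_{c\in A}\lambda_c\,|f(a)|_c=\left(\sum_{c\in A}|f(c)|_c\right)|f(a)|_b,
\]
where the middle equality uses $|f(c)|_b=\lambda_c\,|f(a)|_b$ and the last one uses $|f(c)|_c=\lambda_c\,|f(a)|_c$. Thus $M_f\mathbf{v}=\bigl(\sum_{c\in A}|f(c)|_c\bigr)\mathbf{v}$, that is, $\Psi(f(a))$ is an eigenvector of $M_f$ associated with the eigenvalue $\sum_{b\in A}|f(b)|_b$, which is the assertion.

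There is essentially no obstacle here: once one observes that the columns of $M_f$ are the pairwise collinear Parikh vectors $\Psi(f(b))$, the statement follows from a one-line matrix--vector computation (equivalently, from the rank-one factorization of $M_f$). The only point deserving a moment's thought — and the sole place the hypothesis is used — is that the candidate eigenvector $\Psi(f(a))$ must be nonzero, which is precisely the statement that an immortal letter cannot have empty image. It is irrelevant that the scalars $\lambda_b$ are a priori merely rational rather than integral: the computation above never uses them individually, and the resulting eigenvalue $\sum_{b\in A}|f(b)|_b$ is in any case a nonnegative integer.
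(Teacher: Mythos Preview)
Your argument is correct. The computation is exactly the natural one: once you note that the columns of $M_f$ are the Parikh vectors $\Psi(f(c))=\lambda_c\,\Psi(f(a))$, the rank-one structure gives $M_f\Psi(f(a))=\bigl(\sum_{c}|f(c)|_c\bigr)\Psi(f(a))$ in a single line, and immortality of $a$ ensures the vector is nonzero.

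As for comparison: the paper does not actually supply a proof of this lemma. It is stated as part of a summary of results carried over from the earlier conference paper, so there is nothing to compare your argument against here. Your direct verification is the standard route and would be the expected proof.

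One very minor remark on your parenthetical aside ``$\lambda_b=0$ exactly when $b$ is mortal'': this is true under the hypotheses of the lemma, but it is not entirely immediate---it relies on the eigenvalue $k=\sum_c|f(c)|_c$ being positive (which follows from the existence of an immortal letter, since $\Psi(f^n(a))=k^{n-1}\Psi(f(a))$ would vanish for $n\ge2$ if $k=0$). Since you do not use this remark in the proof itself, it does no harm, but if you keep it you might want to flag that it uses $k\ge1$.
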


When speaking of \emph{the eigenvalue} of a Parikh-collinear morphism $f$, we mean the eigenvalue
$\sum_{b \in A}|f(b)|_b$ of $M_f$; this is justified as, the matrix $M_f$ having rank $1$, the only other eigenvalue is $0$ with multiplicity $\#A-1$.

Even though Parikh-collinear morphisms are generally non-uniform, we proved the following result.
\begin{theorem}[{\cite[Thm.~5]{RSW-DLT23}}]\label{thm:kautomatic}
Let $f\colon A^* \to A^*$ be a Parikh-collinear morphism prolongable on a letter $a \in A$.  Then the fixed point $f^{\omega}(a)$
is $k$-automatic for the eigenvalue $k$ of $f$.  Furthermore,  a coding together with a $k$-uniform morphism generating the infinite word can be effectively computed.
\end{theorem}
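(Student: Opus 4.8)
The plan is to reduce the claim to a uniform-morphism description by dilating the non-uniform Parikh-collinear morphism $f$ to a $k$-uniform one, where $k$ is the eigenvalue, and then invoke Cobham's theorem characterising automatic sequences as codings of fixed points of uniform morphisms. The crucial structural input is \Cref{lem:eigenvaluevector}: since $M_f$ has rank $1$, for every immortal letter $b$ the vector $\Psi(f(b))$ is a scalar multiple of the common direction, say $\Psi(f(b)) = \ell_b \cdot v$ for a primitive integer vector $v$, and $\sum_{b\in A}|f(b)|_b = k$ is the eigenvalue. In particular $|f(b)| = \ell_b \cdot |v|_{\text{sum}}$ grows in a controlled, ``collinear'' way, and for any word $w$ one has $|f(w)|$ depending only on $\Psi(w)$ through the inner product with the row vector $(|f(a_1)|,\dots,|f(a_k)|)$. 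The key observation is that the length of $f^2(b)$ is governed by $k$: indeed $\Psi(f^2(b)) = M_f \Psi(f(b)) = k\,\Psi(f(b))$, hence $|f^2(b)| = k\,|f(b)|$ for every immortal $b$. This is exactly the self-similarity needed.

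**First** I would pass to $f^2$ (or a suitable power) so that the immortal letters satisfy $|f^2(b)| = k|f(b)|$ and restrict attention to the fixed point, which only involves iterating on immortal letters. The next step is to handle mortality and erasing: letters are partitioned into mortal and immortal, and one shows that in $f^\omega(a)$ only the immortal letters and the bounded mortal ``garbage'' attached to them matter. Concretely, write $f(b) = w_0^{(b)} c_1 w_1^{(b)} c_2 \cdots c_{m_b} w_{m_b}^{(b)}$ where the $c_i$ are the immortal letters occurring in $f(b)$ (in order) and the $w_i^{(b)}$ are words over mortal letters; these mortal blocks eventually vanish under iteration but contribute a bounded-size decoration. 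The plan is to build a new alphabet whose symbols are pairs (immortal letter, bounded piece of mortal decoration), or more cleanly to use the standard ``desubstitution'' machinery: every position of $f^\omega(a)$ lies in a unique $f$-image of an immortal letter, and iterating this gives a base-$k$-like addressing of positions.

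**The technical heart** is turning the non-uniform images $f(b)$, which have varying lengths $\ell_b \cdot |v|_{\text{sum}}$, into genuinely $k$-uniform ones. The trick, going back to the proof that fixed points of morphisms of ``constant length ratio'' are automatic, is to work not with single positions but with \emph{blocks}: since $|f^2(b)| = k|f(b)|$, the map sending the $f$-image of $b$ to the list of $f$-images of its letters subdivides a block of ``size $|f(b)|$'' into $k$ sub-blocks of sizes $|f(b')|$ for the letters $b'$ of $f(b)$, and there are exactly $k$ of them counted with the eigenvalue weighting. Formalising this requires choosing, for each immortal letter $b$, an enumeration of the $k$ ``children'' (with multiplicity coming from $|f(b)|_{b'}$) and checking consistency. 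I expect the main obstacle to be precisely this bookkeeping: making the $k$-uniform morphism and the coding fully explicit and effective, in particular arguing that the recoding alphabet is finite (it will be indexed by immortal letters together with an offset bounded by $\max_b |f(b)|$ or by the mortal decorations), and verifying that the coding applied to its fixed point really reproduces $f^\omega(a)$ letter by letter. The non-uniformity and erasing are what make this more delicate than the classical uniform case, but the rank-$1$ condition, via \Cref{lem:eigenvaluevector}, forces enough rigidity ($|f^2(b)| = k|f(b)|$ uniformly over immortal $b$) that the construction goes through; the effectivity is then immediate since everything is computed from the finite data of $f$.
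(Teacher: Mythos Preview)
The paper does not give a proof of this theorem; it quotes it from the conference version \cite{RSW-DLT23} and remarks that it ``can be considered folklore'' (a consequence of results of Dekking and of Allouche--Dekking--Queff\'elec). What the paper does contain is the worked example in \cref{sec:4}, where the procedure produces a new alphabet $\{\widehat{b}_i : b \text{ immortal},\ 1\le i\le |f(b)|\}$, a coding $\tau(\widehat{b}_i)=(f(b))_i$, and a $k$-uniform morphism $g$ on this alphabet. Your sketch is precisely this construction: the alphabet you propose, indexed by ``immortal letters together with an offset bounded by $\max_b|f(b)|$'', is exactly the $\widehat{b}_i$ alphabet, and your key identity $|f^2(b)|=k\,|f(b)|$ for immortal $b$ (obtained from $M_f\Psi(f(b))=k\Psi(f(b))$ via \cref{lem:eigenvaluevector}) is what makes the $k$-uniform subdivision work. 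So your plan is correct and matches the intended construction.

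One point could be stated more cleanly: you do not really need to ``pass to $f^2$'' or worry about mortal decorations surviving through iterations. Since $f(c)=\varepsilon$ for every mortal letter (the Parikh-collinear dichotomy makes mortality instantaneous, see \cref{eq:alphabets}), one has directly $f^\omega(a)=f(g^\omega(a))$ for the non-erasing morphism $g=\kappa\circ f$ on the immortal subalphabet (this is \cref{lem:fixedpoint}). The block decomposition $f^\omega(a)=f(y_0)f(y_1)\cdots$ with $y_j$ immortal is then immediate, and the new $k$-uniform morphism sends $\widehat{b}_i$ to the length-$k$ word of hatted symbols describing which sub-block of $f^2(b)$ covers positions $(i-1)k+1,\ldots,ik$. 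This is a minor streamlining of exactly what you wrote.
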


The latter result can be considered folklore: it can be seen as a consequence of \cite[Thm.~2.2 or~4.2]{AlloucheDekQue2021}, the former of which is itself a reformulation of a result of
Dekking \cite{Dekking1978} (we note however, that the statements speak of non-erasing morphisms). It is well known that there exist infinite
sequences that are the fixed points of non-uniform morphisms, but not $k$-automatic for any $k$, and that every $k$-automatic sequence is the image of a fixed point of a non-uniform morphism \cite{ASNonuniform}. A recent preprint \cite{KrawczykM2023automaticity} completely characterizes those uniformly recurrent (i.e., every factor occurs infinitely often and with bounded gaps) morphic words that are automatic. 

Next we proved in \cite[Thm.~10]{RSW-DLT23} that under some mild assumptions (about the automaticity of the cutting set that we will discuss further) the abelian complexity of a binary fixed point of a Parikh-collinear morphism is automatic. We can therefore use an automatic procedure to test whether or not this function is ultimately periodic, for example. Answering a question raised by Salo and Sportiello, considering the abelian complexity of the fixed point $\infw{w}=0100111001\cdots$ of the morphism $f \colon \{0,1\}^* \to \{0,1\}^*$ given by $0 \mapsto 010011$, $1 \mapsto 1001$, we showed that its abelian complexity is aperiodic. 
We also gave a proof sketch showing that the abelian complexity function of a fixed point of a non-erasing Parikh-collinear morphism is automatic.

\subsection{Our contributions}
For this special edition we did not want to replicate the results of the proceedings \cite{RSW-DLT23}. Therefore our main contribution is to generalize \cite[Thm.~10]{RSW-DLT23} to an arbitrary Parikh-collinear morphism: meaning on an alphabet of arbitrary size and the morphism may be erasing.

\begin{theorem}\label{thm:main}
Let $f \colon A^* \to A^*$ be a Parikh-collinear morphism prolongable on the letter $a$.
The the abelian complexity function $\ac{\infw{x}}$ of $\infw{x}:=f^{\omega}(a)$ is $k$-automatic for the eigenvalue $k$ of $f$.
Moreover, the automaton generating $\ac{\infw{x}}$ can be effectively computed given $f$ and $a$.
\end{theorem}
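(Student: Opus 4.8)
The plan is to use the rank-one structure of $M_f$ to linearize the prefix Parikh vectors of $\infw{x}$ up to a bounded correction, and then to extract the abelian complexity from a single $k$-automatic sequence that records the desubstitution data of $\infw{x}$.

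The structural input from Parikh-collinearity is that, $M_f$ having rank one, all nonzero vectors $\Psi(f(b))$ ($b\in A$) are positive multiples of one vector $v$, so that $\Psi(f(w)) = |f(w)|\,\mathbf{s}$ for every $w\in A^*$, where $\mathbf{s} := v/\|v\|_1$ is fixed (and $\mathbf{s}\neq 0$ since $a$ is immortal; compare Lemma~\ref{lem:eigenvaluevector}). Factoring $\infw{x} = f(x_0)f(x_1)\cdots$, let, for a position $j$, $\phi(j)$ be the index of the (nonempty) $f$-block containing $j$ and $\psi(j)$ its offset inside that block. Because the block $\phi(j)$ starts at position $|f(x_0\cdots x_{\phi(j)-1})| = j-\psi(j)$, one gets
\[
P(j) := \Psi(\infw{x}[0..j-1]) = \bigl(j-\psi(j)\bigr)\,\mathbf{s} + q(j),\qquad q(j) := \Psi\bigl(\pref_{\psi(j)}(f(x_{\phi(j)}))\bigr),
\]
where $\psi(j)$ ranges over $\{0,\dots,\max_b|f(b)|-1\}$ and $q(j)$ over the finite set $Q$ of Parikh vectors of prefixes of the words $f(b)$. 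Consequently, for all $i$ and $n$,
\[
\Psi(\infw{x}[i..i+n-1]) = P(i+n)-P(i) = n\,\mathbf{s} + \rho\bigl(E(i),E(i+n)\bigr),
\]
where $E(j) := (\psi(j),q(j),x_{\phi(j)})$ lies in a fixed finite alphabet $\mathcal{E}$ and $\rho\bigl((\psi_1,q_1,\cdot),(\psi_2,q_2,\cdot)\bigr) := (\psi_1-\psi_2)\,\mathbf{s} + q_2-q_1$ takes finitely many values, collected in a finite set $V$.

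The crux — and what I expect to be the main obstacle — is to prove that $E$ is a $k$-automatic sequence. By Theorem~\ref{thm:kautomatic}, $\infw{x}$ is $k$-automatic; and since each $f$-block has length at most $\max_b|f(b)|$, the value $E(j)$ is a function of a bounded window of $\infw{x}$ around $j$ provided $f$ is recognizable with an effectively bounded constant — equivalently, provided the cutting set $\{\,|f(x_0\cdots x_{m-1})| : m\ge 0\,\}$ is $k$-automatic. Establishing this for an arbitrary, possibly erasing, Parikh-collinear morphism is precisely the role of the analysis of the constant of recognizability and of the cutting set; one may either argue directly on the $k$-uniform presentation underlying Theorem~\ref{thm:kautomatic}, or pass through an auxiliary non-erasing morphism on an extended alphabet whose coded fixed point is $\infw{x}$. (If $\infw{x}$ is ultimately periodic the statement is immediate and may be set aside.) Granting recognizability, $E$ is $k$-automatic and an automaton for it is effectively computable, possibly after adjusting finitely many initial values.

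It then remains to assemble the pieces. From the identity above, the set of Parikh vectors of length-$n$ factors of $\infw{x}$ equals $n\,\mathbf{s} + \mathcal{R}(n)$, where $\mathcal{R}(n) := \{\rho(E(i),E(i+n)) : i\ge 0\}\subseteq V$; hence $\ac{\infw{x}}(n) = \#\mathcal{R}(n)$, which depends only on $\mathcal{R}(n)$ (translation by $n\mathbf{s}$ being irrelevant to cardinality — this also re-proves boundedness). Finally $\mathcal{R}(n)$ is determined by $S(n):=\{(E(i),E(i+n)):i\ge 0\}\subseteq\mathcal{E}^2$, and for each pair $(\alpha,\beta)\in\mathcal{E}^2$ the set $\{\,n : \exists i\ E(i)=\alpha\wedge E(i+n)=\beta\,\}$ is first-order definable from $E$ in Büchi arithmetic $\langle\N,+,V_k\rangle$, hence $k$-automatic by the Büchi–Bruyère theorem and effectively so. Therefore $n\mapsto S(n)$, then $n\mapsto\mathcal{R}(n)$, then $\ac{\infw{x}} = \#\mathcal{R}(\cdot)$, are all $k$-automatic, and the desired automaton is obtained effectively from the automaton for $E$ by standard product and projection constructions.
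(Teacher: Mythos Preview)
Your proof is correct and follows essentially the same route as the paper: linearize the prefix Parikh vectors via Parikh-collinearity up to a bounded correction governed by the desubstitution data, establish that the cutting set is $k$-automatic through an effective bound on the recognizability constant (this is exactly the content of the paper's Sections~\ref{sec:2}--\ref{sec:3}, culminating in Proposition~\ref{pro:cskdef}), and conclude by first-order definability in $\langle\N,+,V_k\rangle$. The only difference is cosmetic, in the final packaging: the paper shows that each coordinate sequence $(|\pref_n(\infw{x})|_b)_{n\ge 0}$ is $k$-synchronized (Lemma~\ref{lem: sync seq}) and then invokes Shallit's criterion (Theorem~\ref{thm:shallit}), whereas you encode the same bounded correction into an auxiliary finite-alphabet automatic sequence $E$ and run the B\"uchi--Bruy\`ere argument by hand; these are equivalent formulations of the same closing step.
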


Before proving this result in~\cref{sec: proof main thm}, we first need a computable bound on the so-called recognizability constant. In \cref{sec:2}, we have extracted from \cite{Durand1998,DL2017,DHS1999} and, in particular \cite{beal_perrin_restivo_2023}, the relevant definitions and important results showing that, for our study, such a constant exists. Expressed roughly, when we look at a sufficiently long factor, there is a unique pre-image by the morphism $f$ and there is only one way to factorize this factor using blocks of the form $f(b)$,  where $b$ is a letter.

On this basis, we define in \cref{sec:3} the notion of a cutting set. Since the infinite word $\infw{x}=x_0x_1\cdots$ can be factorized as $f(x_0)f(x_1)\cdots$, this set consists of the integers $|f(x_0\cdots x_j)|$ for all $j\ge 0$. Our main observation is that, for a Parikh-collinear morphism~$f$, this set is $k$-definable. We insist that this is a major element which then enables us to apply a decision procedure about the abelian complexity of $\infw{x}=f^\omega(a)$. Such a procedure is described in \cref{sec:4}. We consider the Parikh-collinear morphism  $0\mapsto 012$, $1\mapsto 112002$, $2\mapsto \varepsilon$ and prove with the help of {\tt Walnut} that the fixed point starting with $1$ has an ultimately periodic abelian complexity $135(377)^\omega$.

\section{On the recognizability}\label{sec:2}
For an arbitrary morphism $\sigma\colon A^* \to A^*$, we define 
\[ |\sigma|:=\max\{|\sigma(b)|\colon b\in A\}\text{ and } \langle \sigma\rangle:=\min\{|\sigma(b)|\colon b\in A\},\]
where,  for a word  $w\in A^*$,  we let $|w|$ denote its length.

In what follows, $f\colon A^* \to A^*$ is a Parikh-collinear morphism prolongable on the letter $a \in A$. For an arbitrary morphic word $\infw{x}$, thanks to \cite{Durand2013,Mitrofanov2011}, one can decide whether $\infw{x}$ is ultimately periodic. In our case thanks to \cref{thm:kautomatic}, $\infw{x}=f^\omega(a)$ is also $k$-automatic, and we can therefore make use of the logical characterization of $k$-automatic sequences and it can be readily decided with {\tt Walnut}~\cite{Mousavi2016automatic} using a formula such as
\begin{equation}
  \label{eq:periodic}
  \neg(\exists p> 0)(\exists i\ge 0)(\forall n\ge i)(\infw{x}(n)=\infw{x}(n+p)).
\end{equation}

We will therefore assume in what follows that $\infw{x}$ is not ultimately periodic. Also, we restrict the alphabet~$A$ to the letters appearing in $f^n(a)$ for some $n$. As an example, for the Parikh-collinear morphism $f:1\mapsto 12,\ 2\mapsto 21,\ 3\mapsto 12$ prolongable on $1$, we consider the restriction to the alphabet $\{1,2\}$.

In what follows, an arbitrary morphism $\sigma$ is called \emph{primitive} if $M_\sigma^n$ only contains positive entries for some $n \in \N$.
\begin{lemma}\label{lem:primitive}
  A non-erasing Parikh-collinear morphism is primitive.
\end{lemma}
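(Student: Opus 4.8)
The plan is to establish something slightly stronger: that the adjacency matrix $M_f$ is itself \emph{entrywise} positive, so one may take $n=1$ in the definition of primitivity. The structural facts needed are that $M_f$ has rank~$1$ --- equivalently, its columns $\Psi(f(b))$, $b\in A$, are pairwise collinear, which is just Parikh-collinearity --- together with the standing hypotheses of this section: $f$ is non-erasing, prolongable on $a$, and $A$ has been restricted to the letters occurring in $f^n(a)$ for some $n\ge 0$.

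First I would isolate the common direction of the columns of $M_f$. Since $f$ is non-erasing, each $\Psi(f(b))$ lies in $\N^{A}\setminus\{0\}$; as these vectors are pairwise collinear over $\Z$, there is a fixed $\mathbf{u}\in\N^{A}\setminus\{0\}$ and positive rationals $m_b$ with $\Psi(f(b))=m_b\mathbf{u}$ for all $b\in A$. Writing $u_b$ for the $b$-th entry of $\mathbf{u}$, this yields $[M_f]_{b,c}=|f(c)|_b=m_c\,u_b$ for all $b,c\in A$. Thus $M_f$ is entrywise positive as soon as $u_b>0$ for every $b\in A$, that is, as soon as every letter of $A$ occurs in $f(c)$ for some $c\in A$.

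It remains to check this last condition, which is exactly where the restriction of $A$ is used --- and it is genuinely needed, since, for instance, on $\{a,b,c\}$ the non-erasing Parikh-collinear morphism $a,b,c\mapsto ab$ is not primitive. Let $b\in A$, so $b$ occurs in $f^n(a)$ for some $n\ge 0$. If $n\ge 1$, then $b$ occurs in $f(c)$ for some letter $c$ of $f^{n-1}(a)$, and such a $c$ again belongs to $A$; if $n=0$ then $b=a$, and $a$ occurs in $f(a)$ because $f$ is prolongable on $a$. Either way $b$ occurs in $f(c)$ for some $c\in A$, so $u_b=[\Psi(f(c))]_b/m_c>0$; hence $[M_f]_{b,c}=m_c u_b>0$ for all $b,c\in A$, and $f$ is primitive. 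I do not expect a genuine obstacle here: the single point requiring care is that primitivity does \emph{not} follow from rank~$1$ together with non-erasingness alone, so the argument must invoke the standing alphabet restriction explicitly rather than treat the conclusion as automatic.
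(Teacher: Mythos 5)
Your proof is correct and takes essentially the same route as the paper's: the paper's entire argument is ``Observe that all entries in the adjacency matrix are positive,'' which is exactly the claim you verify in detail, giving primitivity with $n=1$. Your further point that this positivity genuinely relies on the section's standing conventions --- $f$ prolongable on $a$ and $A$ restricted to the letters occurring in some $f^n(a)$, without which $a,b,c\mapsto ab$ is a counterexample --- is accurate and fills in what the paper leaves implicit.
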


\begin{proof}
  Observe that all entries in the adjacency matrix are positive. 
\end{proof}

\begin{remark}
Note that for a non-erasing Parikh-collinear morphism $g$,  we may apply \cite[Thm.~4]{DL2017} which directly provides a computable upper bound on the recognizability constant for the aperiodic word $g^\omega(a)$.
\end{remark}

Since $f$ is Parikh-collinear and possibly erasing,  there is a strong dichotomy among the letters of the alphabet.  Either they are immortal and their image by $f$ contains all letters or, their image by $f$ is empty.  Formally,  for all $b\in A$, either $\Psi(f(b))=0$ or $\Psi(f(b))$ is a non-zero rational multiple of $\Psi(f(a))$. In the latter case, for all $n\ge 0$, $\Psi(f^n(b))$ is therefore non-zero. So, the alphabet is partitioned as $A=B\cup C$ where
\begin{equation}
  \label{eq:alphabets}
B:=\{b\in A\mid f^n(b)\neq\varepsilon,\ \forall n\ge 0\}
\quad\text{ and }\quad
C:=\{b\in A\mid f(b)=\varepsilon\}.  
\end{equation}

\begin{definition}
We use notation from~\cref{eq:alphabets}. Let $\kappa:A^*\to B^*$ be a morphism such that $\kappa(b)=b$ if $b\in B$ and $\kappa(c)=\varepsilon$ for all $c\in C$. 
Now we define a morphism $g:B^*\to B^*$ such that $g(b)=\kappa(f(b))$ for all $b\in B$.
\end{definition}
 Roughly, the image by $g$ of an immortal letter $b$ of $f$ is obtained by deleting the mortal letters appearing in $f(b)$.

The next statement is obvious.

\begin{lemma}\label{lem:fixedpoint}
  With the above notation, $g=\kappa\circ f$ is a non-erasing Parikh-collinear morphism prolongable on $a$ and satisfying $f(g^\omega(a))=f^\omega(a)$.
\end{lemma}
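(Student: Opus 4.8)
The claim has three parts to verify: (i) $g$ is a non-erasing Parikh-collinear morphism prolongable on $a$; (ii) $g = \kappa \circ f$ (as a self-map of $B^*$, this is the definition, but one should check it is well-typed); and (iii) $f(g^\omega(a)) = f^\omega(a)$. I would dispatch them in that order.

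For (i): The map $g$ is non-erasing because for every $b \in B$ the word $f(b)$ contains at least one immortal letter — indeed, if $f(b)$ consisted only of mortal (i.e. $C$-) letters then $f^2(b) = \varepsilon$, contradicting $b \in B$ — hence $\kappa(f(b)) \neq \varepsilon$. Prolongability: since $f$ is prolongable on $a$, we have $f(a) = a u$ for some $u \in A^*$; applying $\kappa$ and using $a \in B$ gives $g(a) = \kappa(a)\kappa(u) = a\,\kappa(u)$, and $g$ is non-erasing, so $g$ is prolongable on $a$. For Parikh-collinearity: by Lemma~\ref{lem:eigenvaluevector} (or directly from the hypothesis that $f$ is Parikh-collinear), for every immortal $b$ the vector $\Psi(f(b))$ is a rational multiple of $\Psi(f(a))$; restricting each such vector to its $B$-coordinates via $\kappa$ is a linear operation, so the vectors $\Psi(g(b)) = \Psi(\kappa(f(b)))$ remain pairwise collinear. (It is worth noting in passing that, $g$ being non-erasing and Parikh-collinear, Lemma~\ref{lem:primitive} applies and $g$ is primitive.)

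For (iii): This is where the one genuine verification lies. The key observation is that $\kappa$ and $f$ commute on $B^*$ in the relevant sense: for any $b \in B$ one has $\kappa(f(f(b)))$; but also $f$ kills exactly the $C$-letters at the next level, so I would show by a short induction that $f \circ g = f \circ \kappa \circ f = f \circ f$ on $B^*$, using that $f(\kappa(w)) = f(w)$ for every $w \in A^*$ (because $\kappa$ only deletes letters $c$ with $f(c) = \varepsilon$). From $f \circ g = f \circ f$ on $B^*$ one gets $f(g^n(a)) = f^{n+1}(a)$ for all $n$ by induction: the base case is $f(g(a)) = f(\kappa(f(a))) = f(f(a)) = f^2(a)$, and the inductive step is $f(g^{n+1}(a)) = f(g(g^n(a))) = f(f(g^n(a))) = f(f^{n+1}(a)) = f^{n+2}(a)$. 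Passing to the limit — $g^\omega(a) = \lim_n g^n(a)$ and $f^\omega(a) = \lim_n f^{n+1}(a)$, with $f$ continuous for the prefix topology — yields $f(g^\omega(a)) = f^\omega(a)$.

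The main (and only real) obstacle is making the limiting argument in (iii) rigorous when $f$ is erasing: one must check that $f$ is still continuous on $B^\omega$ (equivalently, that $f(g^n(a))$ is a strictly growing sequence of prefixes of $f^\omega(a)$), which follows since $g$ is non-erasing and $a$ is immortal under $f$, so $|f(g^n(a))| \to \infty$; then $f(g^\omega(a))$ is the unique infinite word having all the $f(g^n(a))$ as prefixes, which by the identity $f(g^n(a)) = f^{n+1}(a)$ is exactly $f^\omega(a)$. Everything else is bookkeeping about which letters $\kappa$ deletes.
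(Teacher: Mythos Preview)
Your proof is correct and, in fact, considerably more detailed than the paper's own treatment: the paper simply declares the lemma ``obvious'' and gives no proof whatsoever. Your key identity $f\circ\kappa = f$ on $A^*$ (since $\kappa$ only deletes letters that $f$ already erases) and the resulting induction $f(g^n(a)) = f^{n+1}(a)$ constitute exactly the natural argument.

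One small point worth tightening in part~(i): from $g(a) = a\,\kappa(u)$ and ``$g$ is non-erasing'' you conclude that $g$ is prolongable on~$a$, but non-erasing only guarantees $|g^n(a)|$ is non-decreasing; you still need $|g(a)|\ge 2$, i.e., that $\kappa(u)\neq\varepsilon$. This follows by the same mechanism you used for non-erasingness: if every letter of $u$ were mortal, then $f(u)=\varepsilon$ and hence $f^2(a)=f(a)f(u)=f(a)$, contradicting $|f^n(a)|\to\infty$. With that one line added, the argument is complete.
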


Consider $f\colon 0\mapsto 012$, $1\mapsto 112002$; $2 \mapsto \varepsilon$, we get $g:\{0,1\}^*\to\{0,1\}^*$ such that $g(0)=01$ and $g(1)=1100$.

An infinite word is called \emph{recurrent} if each of its factors appears infinitely often.

\begin{definition}
  Let $\infw{z}$ be a recurrent infinite word and $u$ be a factor of $\infw{z}$. A {\em return word} to $u$ is a non-empty factor $w$ of $\infw{z}$ such that $wu$ contains exactly two occurrences of $u$ as a prefix and as a suffix of $wu$. The infinite word~$\infw{z}$ is {\em $K$-linearly recurrent} if, for all factors $u$, any return word~$w$ to $u$ is such that $|w|\le K |u|$.
\end{definition}

We recall a result from \cite{Durand1998} and \cite[Prop.~12]{DL2017}. It is important to note that the given upper bound is computable.

\begin{proposition}\label{pro:linearlyrecurrent}
  Let $\sigma:A^*\to A^*$ be a primitive morphism prolongable on $a$. The infinite word $\sigma^\omega(a)$ is $K_\sigma$-linearly recurrent and the constant $K_\sigma$ is bounded by $|\sigma|^{4(\# A)^2}$.
\end{proposition}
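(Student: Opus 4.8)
This is essentially \cite[Prop.~12]{DL2017}; the qualitative statement goes back to \cite{Durand1998,DHS1999}. I would argue as follows. Write $\infw{u}:=\sigma^{\omega}(a)=x_{0}x_{1}x_{2}\cdots$ with $x_{i}\in A$, set $n:=\#A$ and $L:=|\sigma|$, and note $L\ge 2$ because $\sigma$ is prolongable on $a$. If $n=1$ then $\infw u=a^{\omega}$ is $1$-linearly recurrent and $1\le L^{4}$, so assume $n\ge 2$. As $\sigma$ is primitive, $M_\sigma$ is a primitive nonnegative integer matrix, hence, by the classical Wielandt bound on the exponent of a primitive matrix, $M_\sigma^{N}>0$ for some $N\le n^{2}-2n+2\le n^{2}$. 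Thus $M_\sigma^{N}\ge J$ entrywise (with $J$ the all-ones matrix), every letter occurs in $\sigma^{N}(b)$ for each $b\in A$, and $\langle\sigma^{N}\rangle\ge n$. The integer $N$, the words $\sigma^{N}(b)$, and all quantities below are computable from $\sigma$.

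Next I would record three elementary inequalities coming from $M_\sigma^{N}\ge J$. First, $M_\sigma^{(j+1)N}=M_\sigma^{N}M_\sigma^{jN}\ge JM_\sigma^{jN}$ gives $|\sigma^{(j+1)N}(b)|\ge n\,|\sigma^{jN}(b)|$ for all $b\in A$ and $j\ge 0$, hence the geometric growth $\langle\sigma^{jN}\rangle\ge n^{j}$. Second, for $j\ge 1$, writing $M_\sigma^{jN}=M_\sigma^{(j-1)N}M_\sigma^{N}$ and using $1\le(M_\sigma^{N})_{cd}\le L^{N}$, any two columns of $M_\sigma^{jN}$ are entrywise within a factor $L^{N}$ of one another, so by summing columns $|\sigma^{jN}|\le L^{N}\langle\sigma^{jN}\rangle$ (bounded distortion; also trivially true for $j=0$). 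Third, $\langle\sigma^{jN}\rangle=\min_{b}|\sigma^{N}(\sigma^{(j-1)N}(b))|\le L^{N}\langle\sigma^{(j-1)N}\rangle$.

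The heart is a bound on return words to a fixed factor $w$; put $m:=|w|$. If $m=1$ it is immediate: in $\infw u=\sigma^{N}(x_{0})\sigma^{N}(x_{1})\cdots$ every block meets every letter, so two consecutive occurrences of a letter are at distance $<2L^{N}$. For $m\ge 2$ let $j\ge 1$ be least with $\langle\sigma^{jN}\rangle\ge m$; by minimality $\langle\sigma^{(j-1)N}\rangle<m$, so $|\sigma^{(j-1)N}|\le L^{N}\langle\sigma^{(j-1)N}\rangle<L^{N}m$ and therefore $|\sigma^{jN}|\le L^{N}|\sigma^{(j-1)N}|<L^{2N}m$. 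In the decomposition $\infw u=\sigma^{jN}(x_{0})\sigma^{jN}(x_{1})\cdots$, since $m\le\langle\sigma^{jN}\rangle$ every occurrence of $w$ sits inside a single pair of consecutive blocks $\sigma^{jN}(x_{i})\sigma^{jN}(x_{i+1})$, so it is carried by one of the length-$2$ factors $x_{i}x_{i+1}$ of $\infw u$. Hence the part of $\infw u$ strictly between two consecutive occurrences of $w$ is, up to two end pieces of total length $<2|\sigma^{jN}|$, the $\sigma^{jN}$-image of a factor of $\infw u$ avoiding every length-$2$ factor $cd$ with $w\sqsubseteq\sigma^{jN}(cd)$; at least one such $cd$ exists and is a factor of $\infw u$. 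It therefore suffices to bound the length of a factor of $\infw u$ avoiding a fixed length-$2$ factor of $\infw u$ — equivalently, a density estimate for length-$2$ factors. This estimate, established \emph{without} recognizability by bootstrapping from the bound $2L^{N}$ on return words to single letters via the self-similarity $\infw u=\sigma^{jN}(\infw u)$, is the technical content of \cite{Durand1998}, with all constants made explicit in \cite[Prop.~12]{DL2017}, and yields $|r|\le c\,|\sigma^{jN}|$ for every return word $r$ to $w$, with an explicit $c$.

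Combining everything, $|r|\le c\,|\sigma^{jN}|<c\,L^{2N}m$, so $\infw u$ is $K_\sigma$-linearly recurrent with $K_\sigma=c\,L^{2N}$; tracking the explicit constants of \cite[Prop.~12]{DL2017} through this chain and using $N\le n^{2}-2n+2$ and $L\ge 2$ shows $K_\sigma\le L^{4n^{2}}=|\sigma|^{4(\#A)^{2}}$, which is computable. The single genuinely delicate point — and the place where primitivity, not merely immortality of the letters, is essential, and where the exponent $4(\#A)^{2}$ is spent — is the density estimate for length-$2$ factors above; everything else is matrix arithmetic and bookkeeping.
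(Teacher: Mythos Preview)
The paper does not prove this proposition at all: it is stated as a result \emph{recalled} from \cite{Durand1998} and \cite[Prop.~12]{DL2017}, with no argument given beyond the citations. Your sketch therefore does strictly more than the paper does here. It is a sound reconstruction of the standard proof: the Wielandt bound on the primitivity index, the three distortion/growth inequalities from $M_\sigma^{N}\ge J$, the choice of $j$ so that $w$ fits in two consecutive $\sigma^{jN}$-blocks, and the reduction to a density estimate for length-$2$ factors are all correct and cleanly stated. You are also right to flag the length-$2$ return-word bound as the only nontrivial step and to defer it to \cite{Durand1998,DL2017}; that is exactly where the explicit exponent $4(\#A)^{2}$ is extracted in \cite[Prop.~12]{DL2017}. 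In short, your approach and the paper's are the same --- cite Durand and Durand--Leroy --- except that you have unpacked the argument while the paper simply quotes the result.
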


By the above result and \cref{lem:primitive,lem:fixedpoint}, there exists a constant $K_g$ such that $\infw{y}=g^\omega(a)$ is $K_g$-linearly recurrent.

\begin{corollary}\label{cor:Kf}
  The infinite word $\infw{x}=f^\omega(a)=f(\infw{y})$ is $K_f$-linearly recurrent and the constant $K_f$ is bounded by $K_g |f|/\langle f|_B\rangle$. 
\end{corollary}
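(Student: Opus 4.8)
The plan is to transfer linear recurrence from $\infw{y}=g^\omega(a)$ to $\infw{x}=f(\infw{y})=f^\omega(a)$ by tracking how the morphism $f$ distorts lengths. The starting point is \cref{lem:fixedpoint}, which gives $f(\infw{y})=\infw{x}$, together with the fact (from \cref{pro:linearlyrecurrent} applied via \cref{lem:primitive,lem:fixedpoint}) that $\infw{y}$ is $K_g$-linearly recurrent with $K_g\le |g|^{4(\#B)^2}$. Since every letter of $B$ is immortal with $|f(b)|\ge \langle f|_B\rangle := \min\{|f(b)|\colon b\in B\}>0$ and $|f(b)|\le |f|$, the morphism $f$ restricted to $B^*$ distorts the length of any word $w\in B^*$ only by a bounded factor: $\langle f|_B\rangle\,|w| \le |f(w)| \le |f|\,|w|$.

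The key step is the return-word comparison. Let $u$ be a factor of $\infw{x}$ and $w$ a return word to $u$ in $\infw{x}$. Because $\infw{x}=f(\infw{y})$, the occurrence of $wu$ sits inside $f(v)$ for some factor $v$ of $\infw{y}$ that is not much longer than needed: one can choose $v$ so that $f(v)$ covers the two consecutive occurrences of $u$ with at most one extra $f$-block of slack on each side, hence $|v|\le |u|/\langle f|_B\rangle + O(1)$, and in fact the relevant estimate is $|v|\le (|w|+|u|)/\langle f|_B\rangle$. Inside $\infw{y}$, the displacement between the two corresponding occurrences is a return word (or a bounded concatenation of return words) to a factor $u'$ of $\infw{y}$ of length at most $|u|/\langle f|_B\rangle+O(1)$; by $K_g$-linear recurrence of $\infw{y}$ that displacement has length at most $K_g|u'|$. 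Applying $f$ and using $|f(\cdot)|\le |f|\,|\cdot|$ on one side and $|u|\ge \langle f|_B\rangle\,|u'|$-type bounds on the other yields $|w|\le K_g\,\dfrac{|f|}{\langle f|_B\rangle}\,|u|$, which is the claimed bound $K_f\le K_g|f|/\langle f|_B\rangle$. (One must also observe that the restricted alphabet convention from \cref{sec:2} guarantees $\langle f|_B\rangle\ge 1$, so the constant is finite.)

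The main obstacle is the bookkeeping at the boundaries: a return word $w$ to $u$ in $\infw{x}$ need not align with $f$-block boundaries, so the pre-image factor $v$ in $\infw{y}$ must be chosen with care, and one has to argue that the $\infw{y}$-displacement realizing the two occurrences is controlled by a return word to a suitable factor of $\infw{y}$ rather than something larger. This is precisely where one could instead invoke recognizability (\cref{sec:2}), but for the length bound the crude "at most one extra block of slack per side'' estimate suffices and keeps the constant explicit; the additive $O(1)$ terms are absorbed because all factors involved have length at least $1$ while $|u|$ can be taken $\ge \langle f|_B\rangle$ after discarding finitely many short factors. Everything in the bound—$K_g$, $|f|$, $\langle f|_B\rangle$—is computable, so the stated upper bound $K_f\le K_g|f|/\langle f|_B\rangle$ is effective.
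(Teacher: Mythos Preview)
Your overall plan---lift to $\infw{y}$, use its $K_g$-linear recurrence, and push back down via the length-distortion bounds $\langle f|_B\rangle\,|w|\le|f(w)|\le|f|\,|w|$ for $w\in B^*$---is the right one and matches the paper. However, the central paragraph is circular. You start from a given return word $w$ to $u$ in $\infw{x}$ and cover the whole block $wu$ by $f(v)$; this forces $|v|$ to depend on $|w|$, as you yourself record with $|v|\le(|w|+|u|)/\langle f|_B\rangle$ (the earlier line $|v|\le |u|/\langle f|_B\rangle+O(1)$ is simply false for this $v$). You then assert that ``the displacement between the two corresponding occurrences is a return word \ldots\ to a factor $u'$ of $\infw{y}$ of length at most $|u|/\langle f|_B\rangle+O(1)$''. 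This is the gap: the two occurrences of $u$ in $\infw{x}$ need not sit inside $f$-images of the \emph{same} factor of $\infw{y}$, because they can be aligned differently with respect to the $f$-block boundaries. Hence there is no single $u'$ whose return words in $\infw{y}$ control the spacing, and the appeal to $K_g$-linear recurrence does not go through. The parenthetical ``or a bounded concatenation of return words'' does not help either, since nothing bounds the number of such return words without already knowing a bound on $|w|$.

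The paper avoids this by reversing the order of the construction. It does not start from a given $w$; rather, it takes $u$, covers a \emph{single} occurrence of $u$ by a minimal factor $v$ of $\infw{y}$ with $f(v)=pus$, so that $|v|\le |u|/\langle f|_B\rangle$ independently of any return word. Then any return word $r$ to $v$ in $\infw{y}$ has $|r|\le K_g|v|$, and $f(r)$ already contains a return word to $u$, of length at most $|f|\cdot|r|\le K_g\,\dfrac{|f|}{\langle f|_B\rangle}\,|u|$. Since the occurrence of $u$ was arbitrary, this bounds every return word to $u$. If you restructure your argument this way the circularity disappears, and no ``discard finitely many short factors'' patch is needed.
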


\begin{proof}
 Since $\infw{x}=f(\infw{y})$ by~\cref{lem:fixedpoint} the conclusion follows: Let $u$ be a factor of $\infw{x}$. There exists a factor $v$ of $\infw{y}$ such that $f(v)=pus$ for some words $p,s$ of minimal length and $|v|\le |u|/\langle f|_B\rangle$ (recall that the letters of $B$ do not vanish under $f$). Since $\infw{y}$ is linearly recurrent, any return word $r$ to $v$ has length at most $K_g |v|$. Observe that $f(r)$ contains a return word to $u$ and has length bounded above by $K_g |v|\, |f| \leq K_g\frac{|f|}{\langle f|_B \rangle} |u|$.
\end{proof}

We recall the following result \cite[Thm.~24]{DHS1999}.

\begin{proposition}\label{pro:powerfree}
  A $K$-linearly recurrent aperiodic word is $(K+1)$-power-free.
\end{proposition}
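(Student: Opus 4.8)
The plan is to argue by contradiction. Suppose $\infw{z}$ is $K$-linearly recurrent and aperiodic, yet some nonempty word $w$ has $w^{K+1}$ as a factor of $\infw{z}$. Writing $w=v^s$ with $v$ primitive, the factor $w^{K+1}=v^{s(K+1)}$ contains $v^{K+1}$, so I may assume from the start that $w$ is primitive; set $\ell:=|w|\ge 1$. Then $w^{K+1}$ is a factor of $\infw{z}$ of length $(K+1)\ell$, and since it is a factor of $w^\omega$, all of its own factors lie in $\fact(w^\omega)$.

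The first substantive step is to exhibit a short factor of $\infw{z}$ that does not appear in $w^\omega$. As $w$ is primitive, $w^\omega$ has exactly $\ell$ distinct factors of length $\ell$, namely the $\ell$ conjugates of $w$. Since $\infw{z}$ is aperiodic, it has more than $\ell$ distinct factors of length $\ell$ (Morse--Hedlund), so there is a factor $t$ of $\infw{z}$ with $|t|=\ell$ that is not a conjugate of $w$; because the only length-$\ell$ factors of $w^\omega$ are the conjugates of $w$, this gives $t\notin\fact(w^\omega)$, and in particular $t$ is not a factor of $w^{K+1}$.

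Next I would convert $K$-linear recurrence into an unavoidability statement for $t$. The factor of $\infw{z}$ running from one occurrence of $t$ to the next is a return word to $t$, hence has length at most $K|t|=K\ell$; so consecutive occurrences of $t$ start within $K\ell$ positions of one another. A short estimate then shows that every factor of $\infw{z}$ of length at least $(K+1)\ell-1$ lying beyond the first occurrence of $t$ must contain $t$. Since $\infw{z}$ is recurrent, $w^{K+1}$ has an occurrence arbitrarily far out in $\infw{z}$, and $|w^{K+1}|=(K+1)\ell\ge(K+1)\ell-1$; therefore that occurrence of $w^{K+1}$ contains $t$, contradicting the previous paragraph. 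This contradiction shows that $\infw{z}$ is $(K+1)$-power-free.

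The one delicate point I anticipate is the length bookkeeping in the final step: the value $(K+1)\ell$ for $|w^{K+1}|$ is precisely the threshold that forces a factor of length $\ell$ to appear when return words to it can be as long as $K\ell$, so the estimate must be carried out carefully, together with the harmless remark that, by recurrence, one may always push the relevant occurrence of $w^{K+1}$ far enough into $\infw{z}$ to disregard the first few occurrences of $t$. I do not expect any conceptual difficulty beyond this.
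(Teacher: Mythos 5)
Your argument is correct, and I checked the bookkeeping: after reducing to a primitive $w$ with $|w|=\ell$, Morse--Hedlund gives a length-$\ell$ factor $t$ of $\infw{z}$ that is not a conjugate of $w$, hence not a factor of $w^{K+1}$; linear recurrence bounds the gap between consecutive starting positions of $t$ by $K\ell$, so any factor of length at least $(K+1)\ell-1$ occurring after the first occurrence of $t$ contains $t$; and recurrence (which is built into the paper's definition of $K$-linear recurrence) lets you place an occurrence of $w^{K+1}$, of length $(K+1)\ell$, far enough out to get the contradiction. Note, however, that the paper does not prove this proposition at all: it is quoted from Durand--Host--Skau \cite{DHS1999}, where the standard route is different. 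There one first shows that in a $K$-linearly recurrent aperiodic word every return word to a factor $u$ has length strictly greater than $|u|/K$, and then observes that if $u^{K+1}$ were a factor with $u$ primitive, $u$ itself would be a return word to $u^{K}$ of length exactly $|u^{K}|/K$, a contradiction. Your proof is a self-contained elementary alternative resting only on Morse--Hedlund and the gap bound; the classical argument has the side benefit of establishing the lower bound on return-word lengths, which is useful elsewhere in the theory of linearly recurrent words. Either way, the statement stands, and your write-up only needs the final length estimate spelled out as above.
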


The constant of recognizability is usually presented in the framework of shift spaces whose elements are biinfinite words, i.e., sequences indexed by $\mathbb{Z}$. We recap some of the main definitions and results.

\begin{definition}
The {\em shift operator} $S:A^{\mathbb{Z}}\to A^{\mathbb{Z}}$ is defined by $\infw{z}=(z_n)_{n\in\mathbb{Z}} \mapsto S(\infw{z})=(z_{n+1})_{n\in\mathbb{Z}}$. A {\em shift space} is a subset $X\subset A^\mathbb{Z}$ that is shift-invariant, i.e., $S(X)=X$, and topologically closed. The {\em language} of $X$ is the set denoted by $\mathcal{L}(X)$ of factors of the words in $X$. A shift space is {\em aperiodic} if all its elements are aperiodic. Recall that $\infw{z}\in A^\mathbb{Z}$ is {\em periodic} if $\infw{z}=S^n(\infw{z})$ for some $n\ge 1$. 
\end{definition}

Let $\sigma:A^*\to A^*$ be a morphism. We let
\[\Fac(\sigma)=\bigcup_{n\ge 0} \bigcup_{a\in A} \fact(\sigma^n(a))\]
and the so-called {\em substitution shift} associated with~$\sigma$ is 
\[\mathsf{X}(\sigma) =\{ x\in A^\mathbb{Z} \colon \Fac(x)\subset\Fac(\sigma)\}. \]
From the definition, it is clear that $\Fac(\mathsf{X}(\sigma))\subset \Fac(\sigma)$. A morphism $\sigma$ is {\em aperiodic} if the shift space $\mathsf{X}(\sigma)$ is aperiodic.

\begin{proposition}
  Let $f$ be a Parikh-collinear morphism prolongable on a letter $a \in A$  such that $\infw{x}=f^\omega(a)$ is aperiodic. 
 Then the shift space $\mathsf{X}(f)$ is aperiodic.
\end{proposition}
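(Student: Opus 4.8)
The goal is to show that if $\infw{x}=f^\omega(a)$ is aperiodic, then every element of the substitution shift $\mathsf{X}(f)$ is aperiodic. The plan is to argue contrapositively on the level of languages: a periodic biinfinite word in $\mathsf{X}(f)$ would force an arbitrarily long power into $\Fac(f)$, and hence into $\infw{x}$ via linear recurrence, contradicting power-freeness.

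First I would reduce to the non-erasing morphism $g$ of \cref{lem:fixedpoint}. Since $f(g^\omega(a))=f^\omega(a)$ and $g$ is a non-erasing Parikh-collinear morphism, $g$ is primitive by \cref{lem:primitive}, and $\infw{y}=g^\omega(a)$ is $K_g$-linearly recurrent by \cref{pro:linearlyrecurrent}. Moreover $\infw{y}$ is aperiodic: if $\infw{y}$ were ultimately periodic, then applying the non-erasing morphism $f$ would make $\infw{x}=f(\infw{y})$ ultimately periodic as well, contrary to hypothesis. Hence by \cref{cor:Kf} the word $\infw{x}$ is $K_f$-linearly recurrent and aperiodic, so by \cref{pro:powerfree} it is $(K_f+1)$-power-free. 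The key point of this reduction is that every factor of $\infw{x}$ occurs in $\infw{x}$ with bounded gaps, so $\Fac(\mathsf{X}(f))$ is controlled by what appears in $\infw{x}$.

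Next I would establish that $\Fac(f)$ and $\Fac(\infw{x})$ coincide up to a uniformly bounded delay, so that power-freeness of $\infw{x}$ transfers to $\Fac(\mathsf{X}(f))$. By definition $\Fac(\infw{x})\subseteq\Fac(f)=\bigcup_{n,b}\fact(f^n(b))$. Conversely, any letter $b$ appearing in some $f^n(a')$ with $a'\in A$ lies in the restricted alphabet, hence (using $B\cup C=A$ and that immortal letters have images containing all letters of $B$, which in turn occur in $\infw{x}$ since $\infw{y}=g^\omega(a)$ uses all of $B$) every word $f^n(b)$ is a factor of $f^{n+m}(a)$ for suitable $m$, thus a factor of $\infw{x}$. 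Therefore $\Fac(f)=\Fac(\infw{x})$, and consequently $\Fac(\mathsf{X}(f))\subseteq\Fac(f)=\Fac(\infw{x})$ is $(K_f+1)$-power-free as well.

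Finally I would conclude. Suppose for contradiction that $\mathsf{X}(f)$ contains a periodic biinfinite word $\infw{z}$, say $\infw{z}=S^p(\infw{z})$ with $p\ge 1$. Then $\infw{z}$ contains the factor $w^{K_f+2}$ where $w=z_0\cdots z_{p-1}$, so $\Fac(\mathsf{X}(f))$ contains a $(K_f+2)$-power, contradicting the power-freeness just established. Hence every element of $\mathsf{X}(f)$ is aperiodic, i.e., $\mathsf{X}(f)$ is aperiodic. The main obstacle is the bookkeeping in the middle step — making precise that every word $f^n(b)$ occurs as a factor of some prefix of $\infw{x}$, in the presence of erasing letters — but this follows cleanly from the $B/C$ partition in \cref{eq:alphabets} together with the fact that $\infw{y}$ already contains every letter of $B$.
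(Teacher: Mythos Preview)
Your proof is correct and follows essentially the same route as the paper: assume a periodic element exists in $\mathsf{X}(f)$, deduce that arbitrarily high powers occur in $\Fac(f)$, push them into $\infw{x}$, and contradict $(K_f+1)$-power-freeness coming from \cref{cor:Kf} and \cref{pro:powerfree}. The paper's version is slightly leaner: rather than establishing $\Fac(f)=\Fac(\infw{x})$ in full (and detouring through the aperiodicity of $\infw{y}$, which is not needed), it simply uses the standing assumption that every letter $b\in A$ occurs in some $f^n(a)$, so that $u^{K_f+1}\in\fact(f^m(b))$ immediately gives $u^{K_f+1}\in\fact(f^{n+m}(a))\subseteq\Fac(\infw{x})$.
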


\begin{proof}
  Proceed by contradiction and assume that $\mathsf{X}(f)$ contains a periodic element $u^\infty=\cdots uu\cdot uu\cdots$. Hence there exists $b\in A$ such that $u^{K_f+1}$ is a factor of $f^m(b)$ for some $m$. For large enough $n$, $b$ occurs in $f^n(a)$. Hence $u^{K_f+1}$ occurs in $f^{n+m}(a)$ and thus in $\infw{x}$, which is $K_f$-linearly recurrent by \cref{cor:Kf} and aperiodic by assumption. By \cref{pro:powerfree}, $\infw{x}$ is thus $(K_f+1)$-power-free, a contradiction.
\end{proof}

The notion of return words and linear recurrence naturally extends to shift spaces.

\begin{definition}
  Let $X$ be a shift space and $u\in\Fac(X)$. A non-empty word $w\in\Fac(X)$ is a {\em return word to $u$ in $X$} if $wu\in\Fac(X)$ contains exactly two occurrences of $u$ as a prefix and as a suffix of $wu$. The shift space $X$ is {\em $K$-linearly recurrent} if it is minimal (for every closed stable subset $Y$ of $X$, i.e., $S(Y)\subset Y$, one has $Y=\emptyset$ or $Y=X$) and for all non-empty words $u\in\Fac(X)$, the length of every return word to $u$ in $X$ is bounded by $K|u|$.
\end{definition}

\begin{proposition}\label{pro:Kflin}
  Let $f$ be a Parikh-collinear morphism prolongable on a letter $a \in A$  such that $\infw{x}=f^\omega(a)$ is aperiodic. 
 Then the shift space $\mathsf{X}(f)$ is $K_f$-linearly recurrent.
\end{proposition}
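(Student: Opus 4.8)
The plan is to transfer the $K_f$-linear recurrence of the one-sided word $\infw{x}=f^\omega(a)$, already obtained in \cref{cor:Kf}, to the two-sided shift space $\mathsf{X}(f)$. There are two things to check: that $\mathsf{X}(f)$ is minimal, and that every return word to a non-empty $u\in\Fac(\mathsf{X}(f))$ has length at most $K_f|u|$. Both reduce to the key identity $\Fac(\mathsf{X}(f))=\Fac(\infw{x})$, so I would establish that first.

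To prove the identity, note that $\Fac(\mathsf{X}(f))\subseteq\Fac(f)$ holds by definition of the substitution shift, and that $\Fac(f)=\Fac(\infw{x})$: indeed, the alphabet $A$ has been restricted to those letters occurring in some $f^n(a)$, so for each such letter $b$ the word $f^n(b)$ is a factor of some $f^m(a)$, whence $\fact(f^n(b))\subseteq\Fac(\infw{x})$ and thus $\Fac(f)\subseteq\Fac(\infw{x})$; the reverse inclusion is trivial. For $\Fac(\infw{x})\subseteq\Fac(\mathsf{X}(f))$, take $v\in\Fac(\infw{x})$; since $\infw{x}$ is $K_f$-linearly recurrent it is (uniformly) recurrent, so $v$ occurs in arbitrarily long factors of $\infw{x}$ with prefixes and suffixes on either side of unbounded length, and a standard compactness argument produces a biinfinite word $z\in A^{\mathbb Z}$ admitting $v$ as a factor with $\Fac(z)\subseteq\Fac(\infw{x})=\Fac(f)$; hence $z\in\mathsf{X}(f)$ and $v\in\Fac(\mathsf{X}(f))$. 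In particular $\mathsf{X}(f)$ is non-empty.

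With $\Fac(\mathsf{X}(f))=\Fac(\infw{x})$ in hand, minimality of $\mathsf{X}(f)$ follows because its language is uniformly recurrent: a linearly recurrent one-sided word is uniformly recurrent, and a non-empty shift space is minimal exactly when every factor of its language occurs in every sufficiently long factor. For the return word bound, fix a non-empty $u\in\Fac(\mathsf{X}(f))$ and a return word $w$ to $u$ in $\mathsf{X}(f)$. Then $wu\in\Fac(\mathsf{X}(f))=\Fac(\infw{x})$ and $wu$ contains exactly two occurrences of $u$, as a prefix and as a suffix; since this condition depends only on $\Fac(\infw{x})$, which coincides with $\Fac(\mathsf{X}(f))$, the word $w$ is also a return word to $u$ in $\infw{x}$. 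By \cref{cor:Kf}, $|w|\le K_f|u|$, which is what we wanted.

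I expect the only genuine obstacle to be the identity $\Fac(\mathsf{X}(f))=\Fac(\infw{x})$ — specifically the compactness (diagonal) argument building biinfinite elements of $\mathsf{X}(f)$ that realize prescribed factors of $\infw{x}$ — together with the fact, standard but worth stating, that linear recurrence of $\infw{x}$ entails uniform recurrence. Once this identity is established, both minimality and the return word estimate follow immediately by transfer from \cref{cor:Kf}.
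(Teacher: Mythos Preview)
Your proposal is correct and follows the same reduction as the paper: pass from $\mathsf{X}(f)$ to the one-sided fixed point $\infw{x}$ via the language inclusion $\Fac(\mathsf{X}(f))\subseteq\Fac(f)=\Fac(\infw{x})$, then invoke \cref{cor:Kf} to bound return words. Your write-up is in fact more complete than the paper's two-line proof, since you also establish the reverse inclusion and deduce minimality of $\mathsf{X}(f)$ (required by the paper's own definition of linear recurrence for shift spaces), whereas the paper only uses the forward inclusion and leaves minimality implicit.
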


\begin{proof}
  Let $X=\mathsf{X}(f)$, $u\in\mathcal{L}(X)$ and $w$ be a return word to $u$ in $X$. Observe that $\mathcal{L}(X)\subset \mathcal{L}(f)=\mathcal{L}(f^\omega(a))$.
Hence $u,wu$ are factors of $\infw{x}$ which is $K_f$-linearly recurrent by \cref{cor:Kf}.
\end{proof}

We are now ready to first define the notion of recognizable morphism on $X$,
then to introduce recognizable morphism on $X$ with some constant of recognizability.

\begin{definition}
  Let $X\subset A^\mathbb{Z}$ be a shift space. A morphism $\sigma:A^*\to B^*$ is
{\em recognizable on $X$} if, for all $y\in\sigma(X)$, there exists exactly one pair $(x,\ell)\in X\times\mathbb{N}$ such that $0\le \ell < |\sigma(x_0)|$ and $y = S^\ell (\sigma(x))$.
\end{definition}

B\'eal, Perrin,  and Restivo generalized Moss\'e's theorem \cite[Thm.~5.4]{beal_perrin_restivo_2023}.

\begin{theorem}
  Every morphism $\sigma: A^*\to A^*$ is recognizable on $\mathsf{X}(\sigma)$ for aperiodic elements.  In particular, if $\sigma$ is aperiodic, then it is recognizable on $\mathsf{X}(\sigma)$.
\end{theorem}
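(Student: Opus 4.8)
The plan is \emph{not} to give a self-contained proof: the statement is Theorem~5.4 of \cite{beal_perrin_restivo_2023}, which generalizes the classical recognizability theorem of Moss\'e to arbitrary (in particular imprimitive, and in the version we use, erasing) morphisms, and we simply cite it. Still, let us indicate the route one would follow, as it also clarifies why the preparatory material of this section---aperiodicity and $K_f$-linear recurrence of $\mathsf{X}(f)$---is what does the work in the cases we actually invoke. First I would reduce recognizability on $X=\mathsf{X}(\sigma)$ for aperiodic elements to the existence of a \emph{recognizability constant}: an integer $L$ such that, for all aperiodic $x,x'\in X$ and all $\ell,\ell'\ge 0$, if the bi-infinite words $S^\ell\sigma(x)$ and $S^{\ell'}\sigma(x')$ agree on the central window $[-L,L]$ and position $0$ ends a $\sigma$-block of the first factorization, then it also ends a $\sigma$-block of the second. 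Granting such an $L$, the set of cutting points of any admissible $\sigma$-factorization of a given $y\in\sigma(X)$ is forced position by position; this determines $\ell$ and the sequence of block lengths, and, reading each block back, the sequence of pre-image letters, so that the pair $(x,\ell)$ is unique.

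The crux is then to produce $L$, and the plan is to argue by contradiction. If no such $L$ exists, then for every $n$ there are aperiodic points carrying two $\sigma$-factorizations of one and the same bi-infinite word whose cutting points disagree by more than $n$ positions to one fixed side. Following that side and recording the ``synchronization state''---the still-unmatched suffix of the current block of one factorization, together with the still-unmatched prefix of the current block of the other, and the two current letters---one sees only finitely many states, so some state recurs; this yields a long periodic stretch inside a word of $\mathcal{L}(\sigma)$, and iterating, arbitrarily high powers of a fixed non-empty word occur in $\mathcal{L}(\sigma)$. In the primitive non-erasing case (\cref{lem:primitive}) this is immediately absurd: $\mathsf{X}(\sigma)$ is then linearly recurrent and aperiodic, hence power-free by \cref{pro:linearlyrecurrent} and \cref{pro:powerfree}. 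In the general case treated in \cite{beal_perrin_restivo_2023}, where uniform linear recurrence is unavailable, one localizes instead: the offending high powers live inside some minimal subshift $Y\subseteq X$, which \emph{is} linearly recurrent, and one derives the contradiction there, using that the ambiguous factorization is realized by aperiodic points of $Y$.

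The step I expect to be the genuine obstacle---and the reason we take \cite{beal_perrin_restivo_2023} as a black box rather than re-deriving it---is exactly this passage from the primitive case to arbitrary, possibly erasing, morphisms: without primitivity the global power-freeness shortcut disappears and the argument must be run on a carefully chosen minimal component while controlling the mortal letters that $\sigma$ collapses (recall the dichotomy $A=B\cup C$ of \cref{eq:alphabets}). Let us stress that we do need the theorem in this full generality: $\infw{x}=f^\omega(a)$ is generated by the possibly erasing morphism $f$, not merely by its non-erasing reduction $g$ of \cref{lem:fixedpoint}, and recognizability of $g$ alone---which would already follow from Moss\'e's theorem via \cref{lem:primitive}---would not suffice.
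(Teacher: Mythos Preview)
Your proposal is correct and matches the paper's approach exactly: the paper does not prove this theorem at all but simply quotes it as \cite[Thm.~5.4]{beal_perrin_restivo_2023}, and you correctly identify this and opt to cite it as a black box. The proof sketch you add is extra exposition not present in the paper (and not required), but the core treatment---invoke the cited result and move on---is identical.
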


Let $X$ be a shift space and $u,v$ be two finite words such that $uv\in\Fac(X)$.  The {\em cylinder} with basis $(u,v)$ is defined as 
\[ [u\cdot v]_X = \{\infw{z} \in X \colon \infw{z}_{[-|u|,|v|-1]} = uv\} .\]
In particular, if $u=\varepsilon$, we simply write $[v]_X= \{\infw{z} \in X \colon \infw{z}_{[0,|v|-1]} = v\}$.

\begin{definition}\label{def:recognizability_constant}
  Let $\sigma:A^*\to B^*$ be a morphism. Let $X$ be a shift space on $A$ and let $Y$ be
the closure of $\sigma(X)$ under the shift. A pair $(u,v)$ of words such that $uv \in \Fac(Y)$
is {\em synchronizing} if there is at most one pair $(b,\ell)$ with $b\in A$ and $0\le \ell < |\sigma(b)|$  such that $[u\cdot v]_Y \cap S^\ell \sigma([b]_X )\neq\emptyset$. The morphism $\sigma$ is recognizable on X {\em with constant} $n$ if and only if every pair $(u, v)\in\Fac_n(Y)\times \Fac_{n+1} (Y)$ such that $uv\in\Fac(Y)$ is synchronizing.
\end{definition}

 Let $X$ be an aperiodic shift space. The {\em repetition index} of $X$ (also called {\em critical exponent} in the case of an infinite word) denoted by $\rep(X)$ is the supremum of the set of rational numbers~$e$ such that $\mathcal{L}(X)$ contains words of exponent $e$. Finally, we invoke the following result from \cite{BDP2024}.

\begin{theorem}\label{thm:constrec}
The constant of recognizability on $\mathsf{X}(\sigma)$ of an aperiodic morphism~$\sigma$ is bounded by \(4 \rep(\mathsf{X}(\sigma))\, \ell^2\, |\sigma|^{(2\ell+1)(2+|\sigma|^{(2^\ell+1)\ell})}\) where $\ell=\# A$.
\end{theorem}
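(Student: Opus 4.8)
The plan is to make effective the qualitative recognizability statement: since $\mathsf{X}(\sigma)$ is aperiodic, $\sigma$ is already recognizable on it by \cite[Thm.~5.4]{beal_perrin_restivo_2023}, so it remains only to bound the smallest $n$ witnessing \cref{def:recognizability_constant}. Write $Y$ for the shift closure of $\sigma(\mathsf{X}(\sigma))$ and suppose, for contradiction, that some pair $(u,v)\in\Fac_n(Y)\times\Fac_{n+1}(Y)$ with $uv\in\Fac(Y)$ fails to be synchronizing. Unwinding the definition produces two bi-infinite words $x,x'\in\mathsf{X}(\sigma)$ together with two cutting offsets such that $\sigma(x)$ and $\sigma(x')$ both carry the central window $uv$, yet $uv$ straddles the decompositions of $\sigma(x)$ and $\sigma(x')$ into $\sigma$-image blocks in genuinely different ways. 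The aim is to show this forces $n$ to be at most a controlled multiple of $\rep(\mathsf{X}(\sigma))$.

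The engine of the proof is the \emph{propagation of a cutting mismatch into a long repetition}. Comparing the two sequences of cutting points across the window $uv$: if they disagree over a stretch of $N$ consecutive letters then, matching the two block decompositions against each other on that stretch, one obtains a relation ``a word of length $\Theta(N)$ equals a shift of itself by some nonzero $d\le|\sigma|$''; equivalently, $\Fac(\mathsf{X}(\sigma))$ contains a factor of exponent at least $N/|\sigma|$. Since $\mathsf{X}(\sigma)$ is aperiodic this is impossible once $N/|\sigma|>\rep(\mathsf{X}(\sigma))$, which is how power-freeness enters, just as in \cref{pro:powerfree}. This already settles mismatches that never re-synchronize; the subtlety is that the two decompositions may re-synchronize and then drift apart again, and between re-synchronizations the argument is only controlled up to the $\sigma$-preimage structure.

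To remove this subtlety, I would pass to a power $\tau=\sigma^m$. At each cutting point, the data ``which letters of $A$ can legitimately sit there, and at which offset'' lives in a set of size at most $2^{\ell}$ (a subset of $A$, together with at most $|\sigma|$ offsets), so iterating the desubstitution map on these configurations must stabilize after a number of steps that is singly exponential in $\ell$; taking $m$ of order $(2^\ell+1)\ell$ makes $\tau$ ``circular'', meaning that a cutting mismatch for $\tau$ genuinely propagates without re-synchronizing. The repetition bound of the previous paragraph then gives a recognizability constant for $\tau$ of the order $\rep(\mathsf{X}(\sigma))\,\ell^2\,|\tau|^{2\ell+1}$, the factor $\ell^2$ and the surplus powers of $|\tau|$ absorbing the cost of locating the cut among the $\le\ell$ blocks of total length $\le\ell|\tau|$ and of iterating the propagation across the $O(\ell)$ nested levels. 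Finally one transfers recognizability of $\tau=\sigma^m$ back to $\sigma$: inside a synchronized $\tau$-image block the factorization into $\sigma$-blocks is the unique one compatible with the $\sigma^{m-1}$-images, so a $\sigma$-window may be taken of length at most a fixed multiple of the $\tau$-window length times $|\sigma^m|\le|\sigma|^{(2^\ell+1)\ell}$. Collecting the multiplicative constants yields the exponent $(2\ell+1)\bigl(2+|\sigma|^{(2^\ell+1)\ell}\bigr)$ and the leading factor $4\,\rep(\mathsf{X}(\sigma))\,\ell^2$.

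I expect the real obstacle to be the stabilization step of the preceding paragraph: proving that a power $\tau=\sigma^m$ with $m$ only singly exponential in $\ell$ is already ``circular''. Morphisms that are non-injective on letters, or for which distinct images $\sigma(b),\sigma(c)$ share long common prefixes or suffixes, are exactly those for which desynchronization can persist through many levels, and bounding the number of iterations that destroy it — while keeping the dependence on $\rep(\mathsf{X}(\sigma))$ linear and the dependence on $|\sigma|$ no worse than the stated tower — is the technical heart of \cite{BDP2024}. Everything downstream is careful but routine bookkeeping of block lengths and offsets.
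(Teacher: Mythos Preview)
The paper does not prove this theorem at all: it is stated with the preamble ``Finally, we invoke the following result from \cite{BDP2024}'' and is used as a black box. There is therefore no ``paper's own proof'' to compare your proposal against.

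What you have written is a plausible high-level narrative for how such a bound might be obtained --- set up a non-synchronizing window, extract a long repetition from the mismatch of two $\sigma$-factorizations, bound that repetition by $\rep(\mathsf{X}(\sigma))$, and pass to a power of $\sigma$ to handle re-synchronization --- and you yourself correctly flag that the real content (controlling the power $m$ needed to make $\sigma^m$ ``circular'' while keeping the stated dependence on $\ell$ and $|\sigma|$) is deferred to \cite{BDP2024}. As it stands your text is a proof \emph{outline}, not a proof: the quantitative claims (the size-$2^\ell$ configuration space, the $(2^\ell+1)\ell$ stabilization time, the transfer factor from $\tau$ back to $\sigma$, and the final arithmetic producing exactly $(2\ell+1)(2+|\sigma|^{(2^\ell+1)\ell})$) are asserted rather than derived. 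Since the paper itself treats the result as an external citation, the appropriate fix here is simply to cite \cite{BDP2024} rather than to attempt a self-contained argument.
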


We recall the following from \cite{Durand1998} (see also \cite{DHS1999}). This is the extension to shift spaces of \cref{pro:powerfree}: Let $X$ be an aperiodic shift space and suppose that it is $K$-linearly recurrent, then the repetition index is bounded by $\rep(X)<K+1$. Now an immediate application of \cref{pro:linearlyrecurrent,pro:Kflin} together with \cref{thm:constrec} leads to the following result.

\begin{corollary}\label{cor:cnstnt-rec}
  Let $f$ be a Parikh-collinear morphism prolongable on a letter $a \in A$  such that $f^\omega(a)$ is aperiodic. 
The constant of recognizability on $\mathsf{X}(f)$ of the aperiodic morphism~$f$ is bounded by  $4 (|f|^{4\ell^2}+1)\, \ell^2\, |f|^{(2\ell+1)(2+|f|^{(2^\ell+1)\ell})}$ where $\ell=\# A$.
\end{corollary}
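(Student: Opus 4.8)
The plan is to simply chain together the three results the corollary cites, tracking the numerical bounds carefully. First I would recall the statement of \cref{thm:constrec}: for an aperiodic morphism $\sigma$ on an alphabet of size $\ell=\#A$, the constant of recognizability on $\mathsf{X}(\sigma)$ is at most $4\,\rep(\mathsf{X}(\sigma))\,\ell^2\,|\sigma|^{(2\ell+1)(2+|\sigma|^{(2^\ell+1)\ell})}$. Our morphism $f$ is Parikh-collinear, prolongable on $a$, and by hypothesis $f^\omega(a)$ is aperiodic; combined with the preceding proposition (that $\mathsf{X}(f)$ is then aperiodic) we know $f$ is an aperiodic morphism, so \cref{thm:constrec} applies with $\sigma=f$.

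The only quantity in that bound not already explicit in terms of $f$ is $\rep(\mathsf{X}(f))$, so the heart of the argument is to bound it. Here I would invoke the recalled fact from \cite{Durand1998,DHS1999}: an aperiodic $K$-linearly recurrent shift space has $\rep(X)<K+1$. By \cref{pro:Kflin}, $\mathsf{X}(f)$ is $K_f$-linearly recurrent (using that $f^\omega(a)$ is aperiodic), hence $\rep(\mathsf{X}(f))<K_f+1$. Now I need an explicit bound on $K_f$. Rather than routing through \cref{cor:Kf} and the quantities $|f|/\langle f|_B\rangle$ (which would give a slightly different expression), the cleanest route matching the claimed bound is: apply \cref{pro:Kflin} directly — wait, more carefully, the claimed bound has $|f|^{4\ell^2}+1$ in place of $\rep(\mathsf{X}(f))$, and $|f|^{4\ell^2}$ is exactly the linear-recurrence bound $|\sigma|^{4(\#A)^2}$ of \cref{pro:linearlyrecurrent}. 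So the intended chain is: $\mathsf{X}(f)$ has the same language as $f^\omega(a)$, and one bounds $\rep(\mathsf{X}(f))$ not via $K_f$ but by observing that the shift $\mathsf{X}(f)$ is, up to the recognizability machinery, governed by the non-erasing morphism $g$ of \cref{lem:fixedpoint}; then \cref{pro:linearlyrecurrent} applied to $g$ gives $K_g\le |g|^{4\ell^2}\le |f|^{4\ell^2}$ (since $|g|\le|f|$), whence $\rep(\mathsf{X}(f))\le\rep(\mathsf{X}(g))<K_g+1\le|f|^{4\ell^2}+1$. Substituting this into \cref{thm:constrec} with $|\sigma|=|f|$ yields exactly $4\,(|f|^{4\ell^2}+1)\,\ell^2\,|f|^{(2\ell+1)(2+|f|^{(2^\ell+1)\ell})}$.

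I expect the main obstacle to be the bookkeeping in that last step: one must justify that bounding $\rep(\mathsf{X}(f))$ may be done through $g$ (or directly through $f^\omega(a)$ and \cref{cor:Kf}) in a way that produces the clean bound $|f|^{4\ell^2}+1$ rather than something involving $\langle f|_B\rangle$. The key observation is that $\rep$ of a shift space depends only on its language, $\mathcal{L}(\mathsf{X}(f))\subset\mathcal{L}(f^\omega(a))$, and — since powers in $f^\omega(a)=f(g^\omega(a))$ beyond the recognizability threshold must be $f$-images of powers in $g^\omega(a)=\mathsf{X}(g)$ — the critical exponent of $\mathsf{X}(f)$ is controlled by that of $\mathsf{X}(g)$, to which the non-erasing bound of \cref{pro:linearlyrecurrent} applies directly with $|g|\le|f|$ and $\#B\le\ell$. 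Everything else is a direct substitution, so once this reduction is stated the corollary follows immediately.
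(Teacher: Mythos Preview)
Your overall chain --- apply \cref{thm:constrec}, then bound $\rep(\mathsf{X}(f))$ by $K+1$ using the linear recurrence of $\mathsf{X}(f)$ from \cref{pro:Kflin}, with \cref{pro:linearlyrecurrent} supplying the explicit constant --- is exactly what the paper does; its proof is the one-line remark just before the statement: ``an immediate application of \cref{pro:linearlyrecurrent,pro:Kflin} together with \cref{thm:constrec}.''

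Where your proposal over-reaches is in the attempt to recover the \emph{exact} displayed constant $|f|^{4\ell^2}+1$. Your suggested justification that $\rep(\mathsf{X}(f))$ is controlled by $\rep(\mathsf{X}(g))$ because ``powers in $f(g^\omega(a))$ beyond the recognizability threshold desubstitute to powers in $g^\omega(a)$'' is circular: it presupposes a recognizability bound for $f$, which is precisely the quantity we are trying to estimate. And even granting recognizability, a high power in an image need not desubstitute to a power of the same exponent, so $\rep(\mathsf{X}(f))\le\rep(\mathsf{X}(g))$ does not follow. The honest route is the one you first mention and then set aside: go through \cref{cor:Kf}, obtaining $K_f\le K_g\,|f|/\langle f|_B\rangle$ with $K_g\le|g|^{4(\#B)^2}\le|f|^{4\ell^2}$, hence $\rep(\mathsf{X}(f))<|f|^{4\ell^2}\cdot|f|/\langle f|_B\rangle+1$. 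The paper does not spell out this bookkeeping either; the purpose of the corollary is only that the recognizability constant is \emph{computable}, and for that the chain you (and the paper) describe is sufficient.
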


Now consider the right-infinite word $\infw{x}=f^\omega(a)$. It appears as a factor of a element in $\mathsf{X}(f)$. Indeed, since $f$ is Parikh-collinear, there exists some $j\ge 1$ such that $f^j(a)=auav$ (one can take $j=2$). Take the sequence $(f^n(au)\cdot f^n(a) f^n(v))_{n\ge 0}$. By compactness, we can extract a subsequence converging to some biinfinite word $z\cdot f^\omega(a)$ belonging to $\mathsf{X}(f)$.

We have done all this to ensure that there is a computable bound~$C$ guaranteeing that the word~$\infw{x}$ is recognizable: there is a window size bounded by $C$ such that any factor within such a window is uniquely ``desubstituted''. More precisely, this will permit us to uniquely detect elements of the cutting set.

\section{The cutting set}\label{sec:3}

We assume that the reader is familiar with $k$-definable sets and $k$-synchronized sequences. For a reference, see \cite{Shallit2022logical,AS}.

Let $\sigma$ be a morphism prolongable on $a$ and write $\infw{x}=\sigma^\omega(a)$. For all $n\ge 0$, we let $\pref_n(\infw{x})$ be the length-$n$ prefix of $\infw{x}$.
The corresponding {\em cutting set} is defined by
\begin{equation}
\label{eq: cutting set}
\mathsf{CS}_{\sigma,a}:=\left\{ |\sigma(\pref_n(\infw{x}))| \colon \ n\ge 0\right\}.
\end{equation}
This set simply provides the indices where blocks $\sigma(b)$,  with $b\in A$,  start in a factorization of $\infw{x}$ of the form $\sigma(x_0)\sigma(x_1)\sigma(x_2)\cdots$.  For example,  applied to the Parikh-collinear morphism $f\colon 0 \mapsto 012$, $1\mapsto 112002$, $2 \mapsto \varepsilon$ considered before,  we get
\[\infw{x}=|012|112002|112002|112002|012|012|\cdots \text{ and }\mathsf{CS}_{f,0}=\{0,3,9,15,21,24,27,\ldots\}.\]
The unary predicate $\mathsf{CS}_{\sigma,a}(n)$ holds true whenever $n\in \mathsf{CS}_{\sigma,a}$.

\begin{proposition}\label{pro:cskdef}
   Let $f$ be a Parikh-collinear morphism prolongable on a letter $a \in A$  such that $\infw{x}=f^\omega(a)$ is aperiodic. Let $k$ be the eigenvalue of $f$. The cutting set $\mathsf{CS}_{f,a}$ is a $k$-definable unary predicate.
\end{proposition}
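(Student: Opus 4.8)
The plan is to express membership in $\mathsf{CS}_{f,a}$ as a first-order formula in the logical structure $\langle \N, +, <, 0, 1, (\infw{x}(n))_{n}\rangle$ expanded by the predicate ``$\infw{x}(n) = b$'' for each letter $b$, which by \cref{thm:kautomatic} is a $k$-automatic structure; any set defined by such a formula is $k$-definable. The key is the recognizability bound $C$ from \cref{cor:cnstnt-rec} (via the biinfinite extension $z\cdot f^\omega(a) \in \mathsf{X}(f)$ discussed at the end of \cref{sec:2}): an integer $m$ is a cutting point, i.e.\ $m = |f(\pref_n(\infw{x}))|$ for some $n$, precisely when the factor $\infw{x}_{[m-C,\, m+C-1]}$ (suitably interpreted near the left end using the bi-infinite word) ``desubstitutes uniquely at position $m$'', meaning the unique $f$-factorization of the window has a block boundary at $m$. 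So the first step is to set up a first-order predicate $\mathsf{Desub}(m)$ that asserts the existence of a factorization of $\infw{x}_{[0,m+C)}$ into blocks $f(b_0)f(b_1)\cdots f(b_{j-1})$ with $|f(b_0\cdots b_{j-1})| = m$. Doing this quantifier-free over all $j$ is the obstacle; see below.

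The workaround — and the real content of the proof — is a \emph{local} characterization. Rather than talk about the whole prefix, I would fix the recognizability constant $C$ and argue: $m \in \mathsf{CS}_{f,a}$ if and only if the length-$2C$ factor of the biinfinite word $z\cdot\infw{x}$ centered at $m$ lies in a fixed finite set $\mathcal{S}$ of ``synchronizing windows at the center''. Since $f$ is Parikh-collinear with a unique immortal-letter image shape, each block $f(b)$, $b\in B$, is a scalar multiple of $f(a)$; in particular the lengths $|f(b)|$ are the non-zero entries of a rank-one column, so the partial sums $|f(x_0\cdots x_{n-1})|$ are computable from the Parikh vector $\Psi(x_0\cdots x_{n-1})$. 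This is the crucial simplification: the cutting points are exactly the values $\langle \Psi(\pref_n(\infw{x})), v\rangle$ where $v = (|f(a_1)|,\dots,|f(a_k)|)$. Now I invoke the fact (standard for $k$-automatic sequences, and already used implicitly in the paper's decision procedure framework) that for a $k$-automatic word $\infw{x}$ the ``Parikh-synchronization'' relation $\{(n, \Psi(\pref_n(\infw{x}))) : n \ge 0\}$ is $k$-synchronized: the map $n \mapsto |\pref_n(\infw{x})|_b$ is $k$-synchronized for each $b$. Composing with the linear form $v$ (addition and scalar multiplication by constants are $k$-definable) shows that $\{(n, |f(\pref_n(\infw{x}))|) : n\ge 0\}$ is $k$-synchronized, and projecting onto the second coordinate yields that $\mathsf{CS}_{f,a} = \{ |f(\pref_n(\infw{x}))| : n \ge 0\}$ is $k$-definable.

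So the steps, in order, are: (1) reduce to $\infw{y} = g^\omega(a)$ if convenient, or just work with $\infw{x}$ directly, and recall $\infw{x}$ is $k$-automatic (\cref{thm:kautomatic}); (2) observe that $|f(\pref_n(\infw{x}))| = \sum_{b\in A} |f(b)|\cdot |\pref_n(\infw{x})|_b$, a fixed $\Z_{\ge 0}$-linear combination of the letter-counting functions; (3) cite/establish that $n \mapsto |\pref_n(\infw{x})|_b$ is a $k$-synchronized function (this is where the $k$-automaticity of $\infw{x}$, equivalently the existence of a $k$-uniform morphism and coding, is used — the running letter-count of a $k$-automatic sequence is $k$-synchronized); (4) conclude that $\mathsf{CS}_{f,a}$, being the image of $\N$ under a $k$-synchronized function, is a $k$-definable set, and that a formula (hence an automaton) is effectively obtainable since every ingredient is effective. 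The main obstacle I anticipate is step (3): making precise and justifying that the prefix letter-counts of a $k$-automatic sequence form $k$-synchronized sequences — this needs either a direct automaton construction (a transducer summing letter occurrences along the $k$-ary reading of $n$) or a reference to the standard theory; the recognizability constant $C$ from \cref{cor:cnstnt-rec}, meanwhile, is what makes the alternative window-based argument work and guarantees effectivity of that route as a fallback.
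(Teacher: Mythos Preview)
Your main route through step~(3) has a genuine gap. The parenthetical claim that ``the running letter-count of a $k$-automatic sequence is $k$-synchronized'' is false in general: the $2$-automatic characteristic sequence of the powers of~$2$ has running $1$-count equal to $\lfloor\log_2(n-1)\rfloor+1$, and a pumping argument on the long run of $(0,0)$-pairs in the paired base-$2$ input shows no DFA recognizes $\{(n,\lfloor\log_2(n-1)\rfloor+1)_2\}$. It \emph{is} true that $n\mapsto|\pref_n(\infw{x})|_b$ is $k$-synchronized for the particular~$\infw{x}$ under consideration, but in the paper this is \cref{lem: sync seq}, whose proof \emph{uses} the cutting set: one writes $\pref_n(\infw{x})=f(x_n)t_n$ with $|f(x_n)|=\PR(n)$, obtains $q\,|\pref_n(\infw{x})|_b=r\cdot\PR(n)+q\,|t_n|_b$, and the predicate $\PR$ is defined from $\mathsf{CS}_{f,a}$ via \cref{lem: ne and pr}. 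So your argument, as written, is circular relative to the paper's development, and you have not supplied an independent proof of the synchronization you need.

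The window-based approach you sketch first and then demote to a ``fallback'' is exactly the paper's proof, and the obstacle you flag---quantifying over factorizations of the whole prefix---is illusory. By \cref{cor:cnstnt-rec} there is a computable~$C$ such that every length-$(2C{+}1)$ factor $w=ucv$ of~$\infw{x}$ is synchronizing, hence determines a unique pair $(b,\ell)$; set the Boolean $T(w)$ true iff $\ell=0$. Then for $n\ge C$,
\[
n\in\mathsf{CS}_{f,a}\ \Longleftrightarrow\ \bigvee_{w\in\Fac_{2C+1}(\infw{x})}\bigl(\infw{x}[n-C,n+C]=w\ \wedge\ T(w)\bigr),
\]
a \emph{finite} disjunction of predicates, each first-order with indexing into the $k$-automatic word~$\infw{x}$ and hence $k$-definable; the finitely many $n<C$ are handled by inspection. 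The list of length-$(2C{+}1)$ factors is produced effectively from the $k$-uniform presentation $\tau(g^\omega(e))$ of~$\infw{x}$ by scanning the words $g^j(bc)$ for $j=\lceil\log_k(2C{+}1)\rceil$ and all length-$2$ factors~$bc$. No global desubstitution is required: recognizability turns membership in the cutting set into a purely local test on a bounded window.
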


\begin{proof}
  By \cref{cor:cnstnt-rec}, there exists a constant of recognizability $C$ on $\mathsf{X}(f)$ with the following property. 
  By \cref{def:recognizability_constant},  each factor $w=ucv$ of $\infw{x}=f^\omega(a)$ of length $2C+1$ (here $|u|=|v|=C$ and $c\in A$) gives rise to a synchronizing pair,  i.e.,  there exists a unique pair $(b,\ell)$ where $b\in A$, $0\le \ell<|f(b)|$ such that  $[u\cdot cv]_{\mathsf{X}(f)} \cap S^\ell f([b]_{\mathsf{X}(f)} )\neq\emptyset$. If $\ell=0$, we have detected an element of the cutting set starting at the ``center''~$c$ of the factor~$w$. So with each factor $w$ of length $2C+1$, we associate a Boolean $T(w)$ stating whether or not the center of $w$ belongs to the cutting set. 

  Since $\infw{x}$ is $k$-automatic (see \cref{thm:kautomatic}), for every factor $w$ of length $2C+1$ and all $n\ge C$, the unary formula~$\varphi_w(n)\equiv \infw{x}[n-C,n+C]=w$ tells whether or not $w$ occurs in $\infw{x}$ as a factor centered at position~$n$ (in other words,  whether the position $n$ is the center of $w$ in $\infw{x}$). If $n\ge C$, the formula
  \[\bigvee_{w\in\Fac_{2C+1}(\infw{x})} \varphi_w(n)\wedge T(w)\]
  holds true whenever $n$ belongs to the cutting set. 
  For $n<C$, this can be defined by direct inspection: there is a finite number of elements in $\mathsf{CS}_{f,a}\cap \{0,\ldots,C-1\}$ to encode manually into the final formula.

  Now we have to effectively list all factors of length $2C+1$ occurring in $\infw{x}$.  Since $\infw{x}$ is $k$-automatic, we can effectively get a $k$-uniform morphism $g$ and a coding $\tau$ such that $\infw{x}$ is of the form $\tau(g^\omega(e))$ for some letter $e$. We can first list all length-$2$ factors occurring in $g^\omega(e)=\infw{y}$. For instance, we can use a formula such as $(\exists n)(\infw{y}(n)=b\wedge \infw{y}(n+1)=c)$ to test whether or not $bc$ occurs in $\infw{y}$.  Second,  every factor of length~$2C+1$ of $\infw{y}$ appears in $g^j(bc)$ for some letters $b,c$ and $j=\lceil \log_k (2C+1)\rceil$.  So scanning these words $g^j(bc)$ with a window of size $2C+1$, we get all desired factors and we apply $\tau$ to them to get all factors of length $2C+1$ occurring in~$\infw{x}$.
\end{proof}

Let us observe that the above proposition can be given in a different framework where we focus on the recognizability of a single infinite word with respect to the considered morphism. We take the following definition from \cite{DL2017} adapted to (right) infinite words.

\begin{definition}
  Let $\infw{x}=x_0x_1 \cdots$ be a fixed point of a prolongable morphism $\sigma$. We say that $\sigma$ is \emph{recognizable} on~$\infw{x}$ if there exists a constant $C>0$ such that for all $n\ge 0$ and all $i$ such that $|\sigma(x_0\cdots x_{i-1})|
  \ge C$,  we have
  \begin{eqnarray*}
    \infw{x}[n-C,n+C]=\infw{x}[|\sigma(x_0\cdots x_{i-1})|-C,|\sigma(x_0\cdots x_{i-1})|+C]\\
    \Rightarrow
    (\exists j)(n=|\sigma(x_0\cdots x_{j-1})|\wedge x_i=x_j).
  \end{eqnarray*}
  The least $C$ with the above property is then called the \emph{constant of recognizability} of $\sigma$ on~$\infw{x}$. 
\end{definition}

The reader may readily adapt the proof of \cref{pro:cskdef} to the following situation.

\begin{theorem}\label{thm:more_gen}
  Let $\infw{x}=\sigma^\omega(a)$ be a fixed point of a prolongable morphism~$\sigma$. If $\sigma$ is recognizable on $\infw{x}$ with computable recognizability constant $C$ and if $\infw{x}$ is $k$-automatic, then  the cutting set $\mathsf{CS}_{\sigma,a}$ is a $k$-definable unary predicate.
\end{theorem}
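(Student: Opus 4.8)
The plan is to follow the proof of \cref{pro:cskdef}, replacing the synchronizing-pair argument on the shift space $\mathsf{X}(f)$ by a direct appeal to the recognizability of $\sigma$ on the one-sided word $\infw{x}=x_0x_1\cdots$ together with its \emph{computable} constant $C$. Recall that $n\in\mathsf{CS}_{\sigma,a}$ precisely when $n=|\sigma(x_0\cdots x_{j-1})|$ for some $j$, and note that every factor $w\in\Fac_{2C+1}(\infw{x})$ occurs in $\infw{x}$ with a well-defined center at some position $\ge C$ (namely $p+C$ if $w=\infw{x}[p,p+2C]$). To such a $w=ucv$ (with $|u|=|v|=C$ and center letter $c$) I would attach a Boolean $T(w)$, set to $1$ if some occurrence of $w$ has its center at a cutting position and to $0$ otherwise. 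The crucial point, and the only place the recognizability hypothesis enters, is that $T$ is well defined, i.e., independent of the chosen occurrence: if $w$ occurs with center $n\ge C$ and also with center a cutting position $m=|\sigma(x_0\cdots x_{i-1})|\ge C$, then $\infw{x}[n-C,n+C]=w=\infw{x}[m-C,m+C]$, so the defining implication in the recognizability of $\sigma$ on $\infw{x}$, applied to this $n$ and this $i$, forces $n$ itself to be a cutting position.

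With this in hand, set $\varphi_w(n)\equiv(\infw{x}[n-C,n+C]=w)$ for $w\in\Fac_{2C+1}(\infw{x})$; since $\infw{x}$ is $k$-automatic, each $\varphi_w$ is a $k$-definable unary predicate on $\{n\ge C\}$. Then, for $n\ge C$,
\[
n\in\mathsf{CS}_{\sigma,a}\quad\Longleftrightarrow\quad\bigvee_{w\in\Fac_{2C+1}(\infw{x})}\varphi_w(n)\wedge T(w),
\]
the forward implication being immediate (take $w=\infw{x}[n-C,n+C]$, witnessed by $n$ itself) and the backward one being exactly the well-definedness statement above. As the disjunction is finite, the right-hand side is $k$-definable; adjoining the finitely many elements of $\mathsf{CS}_{\sigma,a}\cap\{0,\dots,C-1\}$, obtained by direct inspection, then exhibits $\mathsf{CS}_{\sigma,a}$ as a $k$-definable unary predicate, exactly as in the proof of \cref{pro:cskdef}.

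For effectivity, the set $\Fac_{2C+1}(\infw{x})$ is listed verbatim as in \cref{pro:cskdef}: since $\infw{x}$ is $k$-automatic one effectively obtains a $k$-uniform morphism $g$ and a coding $\tau$ with $\infw{x}=\tau(g^\omega(e))$, lists the length-$2$ factors of $g^\omega(e)$, scans the words $g^j(bc)$ for $j=\lceil\log_k(2C+1)\rceil$ with a window of width $2C+1$, and applies $\tau$. To determine $T(w)$ for a given $w$, note that $\sigma$ is given and prolongable, so an arbitrarily long prefix of $\infw{x}$, and hence the corresponding initial segment of $\mathsf{CS}_{\sigma,a}$, are computable; one then locates any one occurrence of $w$, with center $n_w\ge C$, and tests whether $n_w\in\mathsf{CS}_{\sigma,a}$, the answer being independent of the occurrence by the previous paragraph.

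I expect the only real obstacle to be making this well-definedness of $T$ fully rigorous --- i.e., verifying carefully that recognizability of $\sigma$ on $\infw{x}$ really forces ``the center of a given occurrence is a cutting position'' to depend only on the underlying length-$(2C+1)$ factor; the remaining boundary bookkeeping (the finitely many positions $n<C$) and the effectivity are routine transcriptions of the proof of \cref{pro:cskdef} with ``synchronizing pair for $f$ on $\mathsf{X}(f)$'' replaced throughout by ``recognizability of $\sigma$ on $\infw{x}$''.
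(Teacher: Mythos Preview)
Your proposal is correct and is precisely the adaptation of the proof of \cref{pro:cskdef} that the paper asks the reader to carry out (the paper gives no separate argument beyond ``readily adapt the proof''). Your replacement of the synchronizing-pair check $\ell=0$ by the Boolean $T(w)$, together with the observation that the one-sided recognizability definition immediately forces $T$ to be occurrence-independent (any center of a length-$(2C+1)$ factor is automatically $\ge C$, so the hypothesis $|\sigma(x_0\cdots x_{i-1})|\ge C$ is met), is exactly the right translation.
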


Given an integer $i$, we look for two consecutive integers around $i$,  the next and previous elements found in $C$. The next lemma is obvious (a proof can be found in \cite{RSW-DLT23}).

\begin{lemma}
\label{lem: ne and pr}
  Let $C=\{0=c_0<c_1<c_2<\cdots\}$ be an infinite $k$-definable subset of $\N$. The functions $\NE:\N\to\N$ mapping $i$ to the least element in $C$ greater than or equal to~$i$ and $\PR:\N\to\N$ mapping $i$ to the greatest element in $C$ less than $i$, are $k$-definable. (We set $\PR(0)=0$.)
\end{lemma}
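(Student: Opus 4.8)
The plan is to write down explicit first-order formulas for the graphs of $\NE$ and $\PR$ in terms of the predicate defining $C$, and then invoke the standard fact (see \cite{Shallit2022logical,AS}) that the class of $k$-definable predicates is closed under Boolean connectives and first-order quantification, so that any such formula defines a $k$-definable relation. Throughout, write $C(n)$ for the unary $k$-definable predicate expressing ``$n\in C$''.

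For $\NE$ I would use the formula
\[
\NE(i)=j \iff C(j)\wedge i\le j \wedge \lnot(\exists m)\bigl(C(m)\wedge i\le m \wedge m<j\bigr),
\]
which says exactly that $j$ is the least element of $C$ that is $\ge i$. Substituting the defining formula of $C$ into the right-hand side yields a first-order formula, so the relation is $k$-definable; it is single-valued because $j$ is characterised as a minimum, and it is total because $C$ is infinite, hence unbounded, so some element of $C$ is $\ge i$. For $\PR$ I would proceed identically but treat the boundary case $i=0$ separately — which is precisely why the statement fixes $\PR(0)=0$:
\[
\PR(i)=j \iff \bigl(i=0\wedge j=0\bigr)\ \vee\ \Bigl(i>0\wedge C(j)\wedge j<i\wedge \lnot(\exists m)\bigl(C(m)\wedge j<m \wedge m<i\bigr)\Bigr).
\]
For $i\ge 1$ the second disjunct selects the greatest element of $C$ strictly below $i$, which exists because $0=c_0\in C$; for $i=0$ it returns $0$. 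Again this is first-order over the definition of $C$, so $\PR$ is $k$-definable, total, and single-valued.

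The lemma is flagged as obvious, and rightly so: the only points that deserve any attention are (i) using the unboundedness of $C$ to get totality of $\NE$, and the membership $0\in C$ together with the convention $\PR(0)=0$ to get totality of $\PR$; and (ii) the closure of $k$-definability under the Boolean operations and the bounded-search quantifiers appearing above. I do not expect any genuine obstacle — the entire content is that ``the nearest element of a definable set, approached from above or from below'' is expressible by a first-order formula, which is immediate from the displays above.
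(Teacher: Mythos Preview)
Your proposal is correct and is exactly the standard argument the paper has in mind; indeed, the paper gives no proof here, merely calling the lemma obvious and deferring to \cite{RSW-DLT23}. Your explicit first-order formulas for the graphs of $\NE$ and $\PR$, together with the closure of $k$-definability under Boolean connectives and quantification, constitute precisely that obvious proof, and your remarks on totality (infinitude of $C$ for $\NE$, $0\in C$ and the convention $\PR(0)=0$ for $\PR$) address the only points requiring any care.
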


\section{Proof of \texorpdfstring{\cref{thm:main}}{Theorem 3}}
\label{sec: proof main thm}
We are now in the position to prove the main theorem.
The following theorem of Shallit is crucial in the final step.

\begin{theorem}[\cite{Shallit2021abelian}]
\label{thm:shallit}
Let $\infw{x}$ be an automatic sequence, and assume that
\begin{enumerate}
\item \label{it:1stcond} the sequence $(\Psi(\pref_n(\infw{x}))_{n\geq 0}$ is synchronized; and
\item \label{it:2ndcond} the abelian complexity function $\ac{\infw{x}}\colon \N \to \N$ is bounded above by a constant.
\end{enumerate}
Then $(\ac{\infw{x}}(n))_{n\geq 0}$ is an automatic sequence and the DFAO computing it
is effectively computable.

Furthermore, if Condition \ref{it:1stcond} holds, then Condition \ref{it:2ndcond} is decidable.
\end{theorem}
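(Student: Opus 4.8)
The plan is to reduce both assertions to two standard facts about $k$-regular sequences: that the counting function of a $k$-recognizable relation with finite fibers is $k$-regular (with an effectively computable linear representation), and that a bounded $k$-regular sequence is $k$-automatic with an effectively computable DFAO (see, e.g., \cite{AS,Shallit2022logical}). Throughout, $k$ is a base for which $\infw{x}$ is $k$-automatic, $d:=\#A$, and Parikh vectors are identified with elements of $\N^d$ whose coordinates are read in parallel in base $k$.

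First I would encode the Parikh vectors of factors. The length-$n$ factor of $\infw{x}$ starting at position $i$ has Parikh vector $\Psi(\pref_{i+n}(\infw{x}))-\Psi(\pref_i(\infw{x}))$. By the synchronization hypothesis the graph of $n\mapsto\Psi(\pref_n(\infw{x}))$ is $k$-recognizable; since base-$k$ addition and subtraction are $k$-recognizable and recognizability is closed under Boolean operations and projection, the relation
\[
  F(i,n,v)\;\equiv\;\big[\Psi(\pref_{i+n}(\infw{x}))-\Psi(\pref_i(\infw{x}))=v\big]
\]
is $k$-recognizable, and so is its projection
\[
  R(n,v)\;\equiv\;(\exists i\ge 0)\,F(i,n,v),
\]
which asserts that $v$ is the Parikh vector of some length-$n$ factor of $\infw{x}$. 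By definition $\ac{\infw{x}}(n)=\#\{v\in\N^d : R(n,v)\}$.

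Next I would note that the fibers of $R$ are finite: any such $v$ has nonnegative coordinates summing to $n$, so $\{v : R(n,v)\}$ sits inside the finite set of compositions of $n$ into $d$ parts. Hence the counting result applies using only the synchronization hypothesis, and $\ac{\infw{x}}(n)=\#\{v : R(n,v)\}$ is $k$-regular with an effectively computable linear representation. This already yields the final clause: boundedness of a $k$-regular sequence is decidable, so one can decide whether the boundedness hypothesis holds. Assuming it does, $\ac{\infw{x}}$ is a bounded $k$-regular sequence, hence takes finitely many values and is therefore $k$-automatic; the DFAO computing it is extracted effectively from the linear representation by computing the (finite) $k$-kernel.

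I expect the delicate points to be the two imported facts rather than the logical bookkeeping. The harder of the two is the passage from the recognizable relation $R$ with finite fibers to the $k$-regularity of its counting function: this is exactly where one uses that each admissible $v$ admits a base-$k$ encoding synchronized in length with $n$, so that counting distinct vectors becomes counting accepting runs of a weighted automaton, an $\N$-rational quantity. The second, decidability of boundedness for $k$-regular sequences, underpins the ``furthermore'' clause. Both are available off the shelf once $R$ is built, so the genuinely load-bearing step is the synchronized encoding producing $F$ and $R$, which is precisely where the hypothesis on the prefix Parikh vectors is consumed.
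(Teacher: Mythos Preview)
Your argument is correct. The paper itself does not prove this statement---it is quoted from \cite{Shallit2021abelian}---but \cref{sec:4} illustrates Shallit's procedure on a concrete example, and that procedure differs from yours. Shallit exploits Condition~\ref{it:2ndcond} at the outset: since the abelian complexity is bounded, only finitely many difference vectors $\Psi(v)-\Psi(\pref_n(\infw{x}))$ arise as $v$ ranges over length-$n$ factors and $n$ over $\N$; one enumerates them (they form a finite $k$-recognizable set), writes for each a first-order predicate detecting the lengths $n$ at which it is realised, and then combines these into a DFAO by a finite case split on the subsets of difference vectors that actually occur. You instead defer the boundedness hypothesis: you first obtain $\ac{\infw{x}}$ as a $k$-regular sequence by fiber-counting the recognizable relation $R$, and only afterwards invoke Condition~\ref{it:2ndcond} to pass from $k$-regular to $k$-automatic. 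Your route is conceptually cleaner and yields the ``furthermore'' clause essentially for free (boundedness of an $\N$-rational series is decidable); Shallit's enumeration is what one actually runs with \texttt{Walnut}, since it stays entirely within first-order definability and never materialises a linear representation. One small point you correctly anticipated in your last paragraph: the fiber-counting step requires each admissible $v$ to have a base-$k$ encoding no longer than that of $n$, and this holds here because the coordinates of $v$ sum to $n$.
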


The following lemma gives the last piece of the proof of \cref{thm:main}, as we shall explain after its proof.
\begin{lemma}
\label{lem: sync seq}
  Let $f:A^*\to A^*$ be a Parikh-collinear morphism prolongable on $a$. For all $b\in A$, the sequence $(|\pref_n(\infw{x})|_b)_{n\ge 0}$ is $k$-synchronized.
\end{lemma}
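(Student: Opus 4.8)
The plan is to compute $|\pref_n(\infw{x})|_b$ by cutting the prefix $\pref_n(\infw{x})$ at the last cutting point $\le n$, to recover that cutting point and the trailing incomplete block by recognizability, and then to assemble a $k$-definable description of the relation $\{(n,m):m=|\pref_n(\infw{x})|_b\}$; this is precisely what it means for $(|\pref_n(\infw{x})|_b)_{n\ge0}$ to be $k$-synchronized. We use the standing assumption of \cref{sec:2} that $\infw{x}$ is aperiodic.

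First I would record the only consequence of Parikh-collinearity that is needed: \emph{along the cutting set, the abelianization of a prefix of $\infw{x}$ is a fixed scalar multiple of $\Psi(f(a))$, determined by the prefix length.} Indeed, the nonzero vectors among $\Psi(f(c))$, $c\in A$, all lie on one rational line; let $\mathbf{w}=(w_c)_{c\in A}$ be the positive integer vector spanning it with $\gcd\{w_c\mid c\in A\}=1$. Each $\Psi(f(c))$, being a non-negative rational multiple of $\mathbf{w}$ and integral, equals $m_c\mathbf{w}$ for a unique $m_c\in\N$ (and $m_c=0$ exactly for mortal $c$). Put $L:=\sum_{c\in A}w_c$. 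After the reduction of \cref{sec:2}, $A$ is the set of letters occurring in $\infw{x}=f(\infw{x})$, so each $b\in A$ occurs in some $f(d)$ with $d\in A$; as $|f(d)|_b=m_dw_b$, this forces $w_b\ge1$ and $L\ge\#A\ge1$. Writing $\infw{x}=x_0x_1x_2\cdots$, we have $\infw{x}=f(x_0)f(x_1)f(x_2)\cdots$, so whenever $n=|f(x_0\cdots x_{m-1})|\in\mathsf{CS}_{f,a}$ we get $\pref_n(\infw{x})=f(x_0\cdots x_{m-1})$, hence
\[
\Psi(\pref_n(\infw{x}))=\sum_{i<m}\Psi(f(x_i))=\Bigl(\sum_{i<m}m_{x_i}\Bigr)\mathbf{w}.
\]
Summing coordinates gives $n=L\sum_{i<m}m_{x_i}$; therefore $L$ divides $n$ and $L\,|\pref_n(\infw{x})|_b=w_b\,n$ for every $n\in\mathsf{CS}_{f,a}$.

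Next I would extend this to all $n$ by the recognizability argument from the proof of \cref{pro:cskdef}, but retaining the full synchronizing pair rather than only the Boolean telling whether its offset is $0$. Fix a recognizability constant $C$, computable by \cref{cor:cnstnt-rec}. For $n\ge C$, the length-$(2C+1)$ factor of $\infw{x}$ centred at $n$ determines a unique pair $(b_n,\ell_n)$ with $b_n\in A$ immortal and $0\le\ell_n<|f(b_n)|$, such that position $n$ falls within an occurrence of the block $f(b_n)$ (one of the factors in $\infw{x}=f(x_0)f(x_1)\cdots$) at offset $\ell_n$ from its start; the finitely many positions $n<C$ are handled by direct inspection. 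As $\infw{x}$ is $k$-automatic (\cref{thm:kautomatic}), both $n\mapsto b_n$ and $n\mapsto\ell_n$ are $k$-automatic. That occurrence begins at $n-\ell_n\in\mathsf{CS}_{f,a}$, and $\pref_n(\infw{x})=\pref_{n-\ell_n}(\infw{x})\cdot\pref_{\ell_n}(f(b_n))$. Combining the block decomposition with the previous paragraph (applied at $n-\ell_n$),
\[
L\,|\pref_n(\infw{x})|_b=w_b\,(n-\ell_n)+L\,h(b_n,\ell_n),\qquad\text{where }h(d,j):=|\pref_j(f(d))|_b.
\]
Here $h$ is a function on the finite set $\{(d,j):d\in A,\ 0\le j<|f(d)|\}$, computed by case analysis, so $n\mapsto h(b_n,\ell_n)$ is $k$-automatic. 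Thus $m=|\pref_n(\infw{x})|_b$ is equivalent to $L\,m=w_b\,(n-\ell_n)+L\,h(b_n,\ell_n)$, a relation built from the $k$-automatic sequences $(\ell_n)_{n}$ and $(h(b_n,\ell_n))_{n}$ by addition and multiplication by the fixed constants $L$ and $w_b$; it is therefore $k$-definable. Hence the graph of $n\mapsto|\pref_n(\infw{x})|_b$ is $k$-definable, i.e., $(|\pref_n(\infw{x})|_b)_{n\ge0}$ is $k$-synchronized.

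I expect the main obstacle to be the observation of the second paragraph: seeing that Parikh-collinearity confines $\Psi(\pref_n(\infw{x}))$, for $n$ in the cutting set, to the non-negative integer multiples of $\mathbf{w}$, so that $|\pref_n(\infw{x})|_b$ is an exact linear function of $n$ there. After that, the third paragraph is a faithful reuse of the recognizability tools of \cref{sec:2,sec:3} followed by routine first-order bookkeeping.
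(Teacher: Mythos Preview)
Your proof is correct and follows essentially the same approach as the paper: use Parikh-collinearity to see that $|\pref_n(\infw{x})|_b$ is a fixed rational multiple of $n$ at cutting points, then handle the residual incomplete block via recognizability and a bounded case analysis. The only cosmetic difference is packaging: the paper invokes the already-established predicate $\PR(n)$ on the cutting set and expresses the tail contribution as $|\infw{x}[\PR(n)\,..\,n-1]|_b$ by a bounded first-order formula indexing into $\infw{x}$, whereas you extract the full synchronizing pair $(b_n,\ell_n)$ directly from the recognizability window and read the tail from the finite table $h$; these are two equivalent ways of writing the same correction term.
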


\begin{proof}
Let $b\in A$. Since $f$ is Parikh-collinear, for each immortal letter $c$,  the ratio $|f(c)|_b/|f(c)|$ is constant and depends only on~$b$.  Thus write $|f(a)|_b/|f(a)|=r/q$. 
  
Consider a prefix of $\infw{x}$ of the form $\pref_n(\infw{x})=f(x_n)t_n$ where $x_n$ is a prefix of $\infw{x}$ such that $\PR(n) = |f(x_n)|$.  Since $|f(x_n)|_b = \frac{r}{q}|f(x_n)|$,  we get $q |\pref_n(\infw{x})|_b = r |f(x_n)| + q|t_n|_b$. Define the function $F(n) = |\pref_n(\infw{x})|_b$ for all $n\ge 0$.  Then
\[
y = F(n) \equiv \exists m,z \colon (\PR(n) = m) \wedge (q\cdot (y-z) = r \cdot m) \wedge (|\infw{x}[m...n-1]|_b = z).
\]
Since $|\infw{x}[\PR(n)...n-1]|$ attains finitely many values (recall that $\infw{x}$ has bounded abelian complexity \cite{CassaigneRichomeSaariZamboni2011}), the last check $(|\infw{x}[m...n-1]|_b = z)$ can be expressed by a first-order logical formula with indexing into $\infw{x}$. The formula $y=F(n)$ has two free variables, hence \cite[Thm.~10.2.3]{Shallit2022logical} asserts that there is a DFA recognizing the language $\{(n, F(n))_k : n\ge 0\}$, and is thus synchronized.
\end{proof}

As a corollary of \cite[Thm.~11]{CassaigneRichomeSaariZamboni2011}, the abelian complexity function of a fixed point of a Parikh-collinear morphism is bounded
by a constant, so Condition \ref{it:2ndcond} in \cref{thm:shallit} is satisfied. Since Condition \ref{it:1stcond} is equivalent to the property
that for each $a \in A$, the sequence $(|\pref_n(\infw{x})|_a)_{n\geq 0}$ is synchronized, the above lemma allows to conclude with the proof of 
\cref{thm:main}.

\section{A detailed discussion of the procedure}\label{sec:4}

Throughout this section, we let $f$ be defined by $0 \mapsto 012,1\mapsto 1120022 \mapsto \varepsilon$,
and $f^{\omega}(0)=\infw{x} =x_0x_1\cdots$. Our aim is to prove the following.

\begin{proposition}
  The fixed point $\infw{x}=012112002112002\cdots$ of the Parikh-collinear morphism $f\colon 0 \mapsto 012, 1\mapsto 112002,2 \mapsto \varepsilon$  has abelian complexity equal to $135(377)^{\omega}$.
\end{proposition}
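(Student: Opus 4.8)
The plan is to feed everything into {\tt Walnut} using the machinery developed in the preceding sections, so the ``proof'' is really a description of the automaton constructions and the final formula check. First I would apply \cref{thm:kautomatic} to the morphism $f\colon 0\mapsto 012,\ 1\mapsto 112002,\ 2\mapsto\varepsilon$: its eigenvalue is $k=|f(0)|_0+|f(1)|_1+|f(2)|_2=1+2+0=3$, so $\infw{x}$ is $3$-automatic, and the effective procedure behind that theorem yields a $3$-uniform morphism together with a coding generating $\infw{x}$. I would then record this as a {\tt Walnut} automaton \verb|X| reading base-$3$ representations. As a preliminary sanity step one checks via formula~\eqref{eq:periodic} that $\infw{x}$ is not ultimately periodic (this is needed so that the recognizability results of \cref{sec:2} apply); since the claimed abelian complexity $135(377)^\omega$ is eventually periodic but not constant, aperiodicity of $\infw{x}$ itself is consistent and must be checked separately by the formula above.

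Next I would build the cutting-set predicate $\mathsf{CS}_{f,0}$. Following \cref{pro:cskdef}, I would use \cref{cor:cnstnt-rec} to get an explicit (if astronomically large) bound $C$ on the constant of recognizability of $f$ on $\mathsf{X}(f)$; in practice one does not use that bound directly but rather determines the true, much smaller synchronization window by inspecting desubstitutions of short factors, using the associated non-erasing morphism $g\colon 0\mapsto 01,\ 1\mapsto 1100$ from \cref{lem:fixedpoint} to enumerate factors. The concrete factorization
\[
\infw{x}=|012|112002|112002|112002|012|012|\cdots,\qquad
\mathsf{CS}_{f,0}=\{0,3,9,15,21,24,27,\ldots\},
\]
already exhibited in \cref{sec:3}, is the target object; I would produce a {\tt Walnut} automaton \verb|CS| for the unary predicate $\mathsf{CS}_{f,0}(n)$ in base $3$, then derive automata for the functions $\NE$ and $\PR$ of \cref{lem: ne and pr}.

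With $\mathsf{CS}_{f,0}$, $\PR$ in hand, I would construct the synchronized automaton for the Parikh-prefix sequence exactly as in the proof of \cref{lem: sync seq}: for the letter $0$ the relevant ratio is $|f(0)|_0/|f(0)|=1/3$, for the letter $1$ it is $|f(1)|_1/|f(1)|=2/6=1/3$ as well, and for the letter $2$ it is $1/3$; so $q=3$ and the ratios are uniform, and the formula $y=F_b(n)\equiv \exists m,z\colon \PR(n)=m\wedge 3(y-z)=r_b m\wedge |\infw{x}[m..n-1]|_b=z$ gives a $2$-variable formula, hence (by \cite[Thm.~10.2.3]{Shallit2022logical}) a synchronized automaton for $(|\pref_n(\infw{x})|_b)_{n\ge 0}$ for each $b\in\{0,1,2\}$. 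This verifies Condition~\ref{it:1stcond} of \cref{thm:shallit}; Condition~\ref{it:2ndcond} holds by \cite[Thm.~11]{CassaigneRichomeSaariZamboni2011}. Applying \cref{thm:shallit} then yields an effectively computable DFAO for $\ac{\infw{x}}$. Finally I would compare this DFAO against the explicit candidate: build a $3$-automatic DFAO \verb|G| for the sequence $135(377)^\omega$ (equivalently, an automaton that on input $n$ outputs $135$ if $n=0$ and otherwise $377$), and check in {\tt Walnut} the sentence $\forall n\ (\ac{\infw{x}}(n)=G(n))$; if it returns \verb|TRUE| we are done. I expect the main obstacle to be the size of the recognizability constant from \cref{cor:cnstnt-rec}: the bound is far too large to use literally, so the real work is pinning down the genuine, small synchronization window for this specific $f$ by hand (or by a short auxiliary {\tt Walnut}/computer search) so that the cutting-set automaton \verb|CS| is small enough for the subsequent formulas to be evaluated in practice.
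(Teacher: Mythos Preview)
Your plan coincides with the paper's proof: compute the $3$-uniform presentation of $\infw{x}$, verify aperiodicity, pin down the true recognizability constant by hand (the paper shows it equals $2$, using that $2$ occurs in $\infw{x}$ exactly at positions $\equiv 2\pmod 3$), build the cutting-set predicate and the synchronized prefix-count automata, and then run Shallit's procedure to obtain the abelian-complexity DFAO. The paper carries this out explicitly via the difference-vector approach of \cite{Shallit2021abelian} (enumerating the finitely many possible $\Psi(v)-\Psi(\pref_n(\infw{x}))$ and grouping lengths by which subset occurs), and then simply reads the answer off the resulting three-state automaton rather than building a candidate $G$ and testing equality; but that is only a cosmetic difference from your final step.

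One actual slip to fix: you misread the notation $135(377)^\omega$. It denotes the infinite sequence of single values $1,3,5,3,7,7,3,7,7,\ldots$ with preperiod $1,3,5$ and period $3,7,7$; it does \emph{not} mean ``output $135$ at $n=0$ and $377$ otherwise''. Your comparison automaton $G$ must therefore output a value in $\{1,3,5,7\}$ for each $n$ (concretely: $G(0)=1$, $G(1)=3$, $G(2)=5$, and $G(n)\in\{3,7,7\}$ according to $n\bmod 3$ for $n\ge 3$). With that correction the equality check you describe is valid.
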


Computing $\sum_{a=0}^2 |f(a)|_a = 3$, we know that $\infw{x}$ is $3$-automatic.
In \cite{RSW-DLT23} with \cref{thm:kautomatic}, we give an effective procedure to compute an equivalent morphic representation;
the procedure produces the coding $\tau$ defined by
\[
\widehat{0}_1,\widehat{1}_4,\widehat{1}_5 \mapsto 0;\quad
\widehat{0}_2,\widehat{1}_1,\widehat{1}_2 \mapsto 1;\quad
\widehat{0}_3,\widehat{1}_3,\widehat{1}_6 \mapsto 2
\]
and the $3$-uniform morphism $g$ defined by
\begin{align*}
\widehat{0}_1,\widehat{1}_5,\widehat{1}_6 \mapsto \widehat{0}_1\widehat{0}_2\widehat{0}_3;\quad
\widehat{0}_2, \widehat{1}_1,\widehat{1}_3 \mapsto \widehat{1}_1\widehat{1}_2\widehat{1}_3;\quad
\widehat{0}_3, \widehat{1}_2, \widehat{1}_4 \mapsto \widehat{1}_4\widehat{1}_5\widehat{1}_6,
\end{align*}
so that $\tau(g^{\omega}(\widehat{0}_1)) = \infw{x}$.

One notes that there are redundant letters (i.e.,  they have equal images under both $\tau$ and $g \circ \tau$). We thus find a simpler morphism $h$
by identifying them:
\[
0 \mapsto 0 1 2;  \quad
1  \mapsto 1 3 4; \quad
2 \mapsto 5 0 6; \quad
3 \mapsto 5 0 6; \quad
4 \mapsto 1 3 4; \quad
5 \mapsto 5 0 6; \quad
6 \mapsto 0 1 2,
\]
with which $\tau'(h^{\omega}(0)) = \infw{x}$, where $\tau'$ is defined by $0,5\mapsto 0$; $1,3 \mapsto 1$; $2,4,6 \mapsto 2$.

We may introduce the $3$-automatic word $\infw{x}$ to {\tt Walnut} as follows:

\begin{verbatim}
morphism h "0->012 1->134 2->506 3->506 4->134 5->506 6->012";
morphism tau "0->0 1->1 2->2 3->1 4->2 5 ->0 6->2";
promote H h;
image X tau H;
\end{verbatim}

{\tt Walnut} now knows the infinite word as \verb|X|, and it is now easy to verify that $\infw{x}$ is aperiodic. Indeed, \cref{eq:periodic} translates to:
\begin{verbatim}
eval isaperiodic "?msd_3 ~(Ep,i p>0 & (An n>i => (X[n]=X[n+p])))";
\end{verbatim}
and {\tt Walnut} produces \verb|True|.

Following the procedure, we next wish to compute (or bound) the constant of recognizability.
The bound given in \cref{cor:cnstnt-rec} is $ (6^{36}+1) \cdot 6^{7\cdot(2+6^{27})+2}$, which is unmanageable in practice. 
At this point, we compute the actual constant of recognizability (with the help of \verb|Walnut|) to proceed with the illustration.

\begin{lemma}
\label{lem: constant rec f 2}
Given $f\colon 0 \mapsto 012,1\mapsto 1120022 \mapsto \varepsilon$,  its constant of recognizability is $2$.
\end{lemma}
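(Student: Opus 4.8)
The goal is to show that for the Parikh-collinear morphism $f\colon 0\mapsto 012,\ 1\mapsto 112002,\ 2\mapsto\varepsilon$, the constant of recognizability of $f$ on its fixed point $\infw{x}=f^\omega(0)$ equals $2$. My plan is to verify this directly against the definition of recognizability on an infinite word (the last definition before Lemma 5.5 in the excerpt): I must show that $C=2$ works, and that $C=1$ does not.

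First, I would recall that by Lemma 5.3 we have $\infw{x}=f(\infw{y})$ where $\infw{y}=g^\omega(0)$ with $g(0)=01$, $g(1)=1100$, and the cutting set $\mathsf{CS}_{f,0}$ records the block boundaries in the factorization $f(y_0)f(y_1)\cdots$. Note $f(0)=012$ has length $3$ and $f(1)=112002$ has length $6$; the letter $2$ is mortal and $f(2)=\varepsilon$, so between two consecutive nonempty blocks no extra material is inserted. Concretely, I would exhibit the factorization $\infw{x}=|012|112002|112002|112002|012|012|\cdots$ and set up the verification that, given a window $\infw{x}[n-2,n+2]$ of length $5$, one can recover the unique letter $b$ and offset $\ell\in\{0,\dots,|f(b)|-1\}$ with the factor sitting at position $n$ inside the $f(b)$-block. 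Since $\infw{x}$ is $3$-automatic, the cleanest way to do this (and the route the paper clearly intends, given the preceding \texttt{Walnut} setup) is to phrase "$f$ is recognizable on $\infw{x}$ with constant $C$" as a first-order predicate over $\infw{x}$ using the synchronized cutting set $\mathsf{CS}_{f,0}$ from \cref{pro:cskdef}, and let \texttt{Walnut} evaluate it for $C=1$ and $C=2$.

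The key steps, in order: (1) obtain a \texttt{Walnut} automaton for the cutting-set predicate $\mathsf{CS}_{f,0}(n)$ — this is $k$-definable by \cref{pro:cskdef}, but since we are in the middle of bootstrapping, I would instead define it from scratch here using the $g$-description of $\infw{y}$: $n\in\mathsf{CS}_{f,0}$ iff $n=|f(\pref_m(\infw{y}))|=3\cdot(\text{number of }0\text{'s})+6\cdot(\text{number of }1\text{'s among }y_0\cdots y_{m-1})$ for some $m$; equivalently $n\in\mathsf{CS}_{f,0}$ iff $n\in 3\N$ and $n/3$ equals a prefix-sum value of the $\{1,2\}$-valued weight sequence of $\infw{y}$, which is itself automatic. (2) Translate the recognizability definition: "for all $n$ and all block-starting positions $p=|f(\pref_i(\infw{y}))|\ge C$, if $\infw{x}[n-C,n+C]=\infw{x}[p-C,p+C]$ then $n$ is itself a block-starting position". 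In predicate form, using $\mathsf{CS}$, this reads (for a fixed small $C$) as
\[
\forall n\,\forall p\ \big(\mathsf{CS}_{f,0}(p)\wedge p\ge C\wedge \textstyle\bigwedge_{j=-C}^{C}\infw{x}[n+j]=\infw{x}[p+j]\big)\Rightarrow \mathsf{CS}_{f,0}(n).
\]
(3) Run this for $C=2$ and confirm \texttt{True}; run it for $C=1$ and confirm \texttt{False}, so that the constant is exactly $2$. For the $C=1$ failure one should be able to pinpoint an explicit witness by hand: a length-$3$ pattern occurring both at a block boundary and strictly inside a block (e.g.\ the factor $112$ appears as a prefix of $f(1)=112002$ and also straddling blocks), which already shows $C=1$ is insufficient; I would include such a witness for the reader's intuition.

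The main obstacle is step (1)–(2): correctly encoding the cutting set and the recognizability predicate without circularity and in a form \texttt{Walnut} accepts. In particular, $\infw{x}$ and the cutting set are $\operatorname{msd}_3$ objects, so the prefix-sum characterization of $\mathsf{CS}_{f,0}$ must be realized as a synchronized relation (Büchi–Bruyère style, using that the weighted prefix sums of an automatic sequence form a synchronized sequence), and the comparison $\infw{x}[n+j]=\infw{x}[p+j]$ needs care at the boundary $n<C$. Once the predicate is set up, the verification itself is a finite automaton computation and is not the difficult part; the delicate point is trusting that the \texttt{Walnut} encoding faithfully mirrors \cref{def:recognizability_constant} restricted to the single word $\infw{x}$. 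I would double-check this faithfulness by independently confirming, by direct combinatorial inspection of the short word $012\,112002\,112002\,\cdots$, that every length-$5$ window determines its desubstitution position uniquely.
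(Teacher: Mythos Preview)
Your route differs from the paper's and, as written, contains a real gap. The paper argues by hand: for the lower bound it exhibits the non-synchronizing pair $(1,20)$, which occurs both in $f(00)=012\,012$ and in $f(1)=112002$ with two distinct pairs $(b,\ell)$, namely $(0,2)$ and $(1,2)$; for the upper bound it observes (via a one-line {\tt Walnut} check) that $x_n=2$ iff $n\equiv 2\pmod 3$, so every length-$5$ window contains a $2$, and since both $f(0)$ and $f(1)$ end in $2$ while the return words to $2$ are $201$, $200$, $211$, the two letters preceding that $2$ already decide whether the position after it is a cut. No cutting-set automaton is built beforehand; on the contrary, the paper uses this lemma afterwards to \emph{define} the cutting set.

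The gap in your plan is the predicate in step~(2): you dropped the clause $x_i=x_j$ from the recognizability definition and only test ``window around a cut matches $\Rightarrow$ $n$ is a cut''. For this morphism that truncated predicate already holds at $C=1$. Indeed, the only length-$3$ windows centred at a cutting point are $201$ and $211$ (a cut is preceded by $2$ and followed by $01$ or $11$), and the only non-cut position with a $2$ immediately to its left is offset~$3$ inside an $f(1)$-block, where the window is $200$. Hence your {\tt Walnut} check would return {\tt True} at $C=1$ and you would report the constant as~$1$, not~$2$. What actually fails at $C=1$ is the synchronizing condition of \cref{def:recognizability_constant} at a position with $\ell\neq 0$, precisely the paper's witness $(1,20)$. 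Your suggested hand witness is also incorrect: the factor $112$ never straddles two blocks (every block ends with $2$, and the letters after a cut are $01$ or $11$, never $12$), so $(1,12)$ is in fact synchronizing with unique pair $(b,\ell)=(1,1)$.
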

\begin{proof}
We observe that the factor $120$ appears both in $f(00)=012012$ and $f(1)=112002$, so the pair $(1,20)$ is not synchronizing. This observation
bounds the constant of recognizability from below by $2$.

We observe that each factor of length $5$ contains at least one occurrence of $2$; this is because $2$ appears at the position $n$ if and only if $n\equiv 2 \bmod{3}$,
a fact that can be verified using \verb|Walnut|:

\begin{verbatim}
eval appearance2 "?msd_3 An X[n]=@2 <=> Em n=3*m+2";
\end{verbatim}

Let $w = uv$ be a factor of $\infw{x}$ with $|u|=2$ and $|v|=3$. From the above we deduce that $v$ contains an occurrence of $2$.

Since any cutting point is either $0$ or appears just after an occurrence of $2$ (both $f(0)$ and $f(1)$ end with $2$), it suffices to inspect the two letters appearing just before a $2$ in $v$. Indeed, the return words to $2$ in $\infw{x}$ are $201$, $200$, and $211$;
if the two preceding letters are $00$ or $01$, then the position after $2$ is a cutting point. Otherwise it is not.
Thus the constant of recognizability is bounded above by $2$.
\end{proof}

We next proceed to define the cutting sequence of $\infw{x}$ in {\tt Walnut} as follows;
the index $n$ is a cutting point if $n=0$ or $n\geq 3$ and $x_{n-1}=2$ and $x_{n-3}\neq 1$ (this can be deduced from the proof of~\cref{lem: constant rec f 2}).
\smallskip
\begin{verbatim}
def cut "?msd_3 n=0 | (n>=3 & X[n-1]=@2 & ~(X[n-3]=@1))";
\end{verbatim}
\smallskip

Using the \verb|cut| set, we define the pairs $(n,x)$ such that $x$ is the largest cut point that is at most $n$.
The following predicate \verb|prev| recognizes exactly these pairs (see \cref{lem: ne and pr}).
\begin{verbatim}
def prev "?msd_3 x<=n & $cut(x) & (Ay (y>x & y<=n)=>~$cut(y))";
\end{verbatim}
\smallskip
Next, we define the synchronized sequence of the number of $0$'s (resp., $1$'s,  $2$'s) in the prefix of length $n+1$, $n\geq 0$.
\smallskip
\begin{verbatim}
def prefn0 "?msd_3 (n<=2 & y=1) | (3<=n & Em,z ($prev(n,m) & 3*y=m+3*z
     & ((X[m]=@0 & z=1) |
     (X[m]=@1 & ((n<m+3 & z=0) | (n=m+3 & z=1) | (n>=m+4 & z=2))))))";
\end{verbatim}
\smallskip
Here \verb|prefn0(n,y)| is true if $y$ is the number of $0$'s appearing in the prefix of length $n+1$ of $\infw{x}$, with $n\geq 0$.
We note that if $|f(w)| = \ell$, then $|f(w)|_0 = \ell/3$. Hence for a prefix $f(w)z$, where $z$ is a prefix of the image of a letter, we have $|f(w)z|_0 = \ell/3 + |z|_0$.
If $z$ begins with $0$, then we know that $z$ is a prefix of $f(0)$, so that $|z|_0 = 1$ as long as $z\neq \varepsilon$.
Otherwise $z$ is a prefix of $f(1)$, and $|z|_0 = 0$ if $|z|\leq 3$; $|z|_0 = 1$ if $|z|=4$; and $|z|_0 =2$ otherwise.

With similar arguments one can see that the following predicates define the pairs $(n,|\pref_{n+1}(\infw{x})|_a)$, for $a\in\{1,2\}$.
\smallskip
\begin{verbatim}
def prefn1 "?msd_3 Em,z $prev(n,m) & 3*y=m+3*z &
     ((X[m]=@0 & ((m=n & z=0) | (n>=m+1 & z=1))) |
     (X[m]=@1 & ((m=n & z=1) | (n>=m+1 & z=2))))";

def prefn2 "?msd_3 Em,z $prev(n,m) & 3*y=m+3*z &
     ((X[m]=@0 & ((n<m+2 & z=0) | (m+2=n & z=1))) | (X[m]=@1 &
     ((n<m+2 & z=0) | (n>=m+2 & n<m+5 & z=1) | (n=m+5 & z=2))))";
\end{verbatim}

From this point onward we may proceed as outlined by Shallit in \cite{Shallit2021abelian} to find the abelian complexity function of $\infw{x}$ as a $3$-automatic sequence.

\begin{remark}
We could now find the sequence $(\Psi(\pref_n(\infw{x})))_{n\geq 0}$ as a synchronized sequence
$(n,\Psi(\pref_n(x_n)))_{n\geq 0}$ with the following command:
\begin{verbatim}
def PrefParikhSync "?msd_3 (n=0 & x=0 & y=0 & z=0) |
     (n>=1 & $prefn0(n-1,x) & $prefn1(n-1,y) & $prefn2(n-1,z))";
\end{verbatim}
\noindent However, the automaton seems to be too complex to work with in a practical way.
\end{remark}

We shall opt to proceed with the approach of \cite{Shallit2021abelian} presented for the Tribonacci word. In particular, we work with the synchronized sequences
$(n,|\pref_n(\infw{x})|_a)_{n\geq 0}$, $a\in\{0,1,2\}$, separately instead:
\begin{verbatim}
def pref0 "?msd_3 (n=0 & y=0) | (n>=1 & $prefn0(n-1,y))";
def pref1 "?msd_3 (n=0 & y=0) | (n>=1 & $prefn1(n-1,y))";
def pref2 "?msd_3 (n=0 & y=0) | (n>=1 & $prefn2(n-1,y))";
\end{verbatim}

Next we define the automata accepting the triples $(i,n,|x_i\cdots x_{i+n-1}|_a)$, for $a\in\{0,1,2\}$:
\begin{verbatim}
def syncin0 "?msd_3 Ax,y ($pref0(i,x) & $pref0(i+n,y)) => (z + x = y)";
def syncin1 "?msd_3 Ax,y ($pref1(i,x) & $pref1(i+n,y)) => (z + x = y)";
def syncin2 "?msd_3 Ax,y ($pref2(i,x) & $pref2(i+n,y)) => (z + x = y)";
\end{verbatim}

For each $a\in\{0,1,2\}$, we inspect the possible differences $|v|-|\pref_n(\infw{x})|_a$, where $v$ ranges over
the factors of $\infw{x}$ of length $n$. Since $\infw{x}$ is guaranteed to have bounded abelian complexity,
there are only finitely many such possible values; here we may inspect the possible values of the differences as follows.
The first predicate accepts the non-negative values of $k$ such that $k=|v|_0-|\pref_n(\infw{x})|$; 
the second accepts the non-negative values of $k$ such that $-k = |v|_0-|\pref_n(\infw{x})|_0$:
\begin{verbatim}
def diffs0pos "?msd_3 En,i,x,z $syncin0(i,n,x) & $pref0(n,z) & x=k+z";
def diffs0neg "?msd_3 En,i,x,z $syncin0(i,n,x) & $pref0(n,z) & x+k=z";
\end{verbatim}
Inspecting the automata obtained, the possible values in both cases are $k=0$, $1$, $2$.
This implies that $\left||v|_0 - |\pref_n(\infw{x})|_0\right| \leq 2$, and all possible values are attained.
With similar inspections for the other letters, we find that
\[
-3\leq |v|_1 - |\pref_n(\infw{x})|_1 \leq 2 
\quad \text{and} \quad 
0\leq |v|_2 - |\pref_n(\infw{x})|_2 \leq 1,
\]
and all possible values are attained.
We thus have that
\begin{equation}\label{eq:possible-ranges}
\Psi(v)-\Psi(\pref_n(\infw{x})) \in \{-2,\ldots,2\} \times \{-3,\ldots,2\} \times \{0,1\}
\end{equation}
for any factor $v$ of length $n$. 
(Note that this in particular shows that $\infw{x}$ has bounded abelian complexity.)

If $S$ is the set of possible triples in~\cref{eq:possible-ranges},
we note that the triples $S + (2,3,0)$ will be non-negative. To get all the possible triples~$(s,t,u)$,
we inspect the automaton constructed by \verb|Walnut| with the command
\begin{verbatim}
def validtriples "?msd_3 Ei,n,a,b,c,d,e,f
     $syncin0(i,n,a) & $pref0(n,b) & s+b = a+2 &
     $syncin1(i,n,c) & $pref1(n,d) & t+d = c+3 &
     $syncin2(i,n,e) & $pref2(n,f) & u+f = e";
\end{verbatim}
\begin{figure}
\centering
\includegraphics[scale=.5,trim=35 35 90 620, clip]{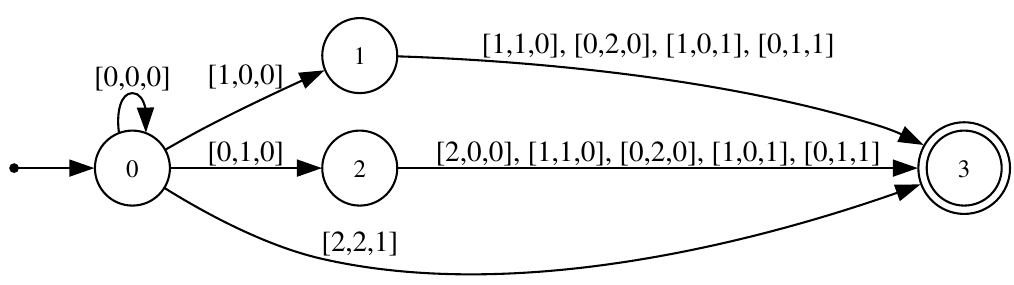}
\caption{The automaton accepting base-$3$ representations of triples of the form $\Psi(v)-\Psi(\pref_n(\infw{x})) + (2,3,0)$, where $v$ ranges through the factors of length $n$, and $n$ ranges through the natural numbers.}
\label{fig:aut for triples diff Parikh}
\end{figure}
We obtain the automaton in~\cref{fig:aut for triples diff Parikh}.
Inspecting it, we see that from the set appearing in \cref{eq:possible-ranges}, all ten vectors with the property that the entries sum to $0$
are attainable (we are comparing Parikh vectors of two words of the same length~$n$); they are
\begin{align*}
\{&(-2,2,0), (-2,1,1), (-1,1,0), (-1,0,1), (0,-1,1), \\
&(0,0,0), (1,-2,1), (1,-1,0), (2,-3,1), (2,-2,0)\}.
\end{align*}
Since $(0,0,0)$ is attained for any length $n$, we have $2^9$ possible sets of difference vectors to consider.
For each such difference vector, we may provide a predicate recognizing those $i$ and $n$ for which the vector is attained.
For example, the vector $(-2,2,0)$ is defined with the command
\begin{verbatim}
def vecn220 "?msd_3 Ea,b,c,d,e,f
     $syncin0(i,n,a) & $pref0(n,b) & a+2=b &
     $syncin1(i,n,c) & $pref1(n,d) & c=d+2 &
     $syncin2(i,n,e) & $pref2(n,f) & e=f";
\end{verbatim}

In principle, one could then consider all $2^9$ possible combinations of possible difference vectors
for a given length $n$. However, computations suggest that the possible sets of vectors are the following:
\begin{align*}
S_1 &= \{(-1, 0, 1), (-1, 1, 0), (0, 0, 0)\},\\
S_2 &= \{(-1, 0, 1), (-1, 1, 0), (0, -1, 1), (0, 0, 0), (1, -1, 0)\},\\
S_3 &= \{(-1, 1, 0), (0, 0, 0), (1, -1, 0)\},\\
S_4 &= \{(-1, 0, 1), (-1, 1, 0), (0, -1, 1), (0, 0, 0), (1, -2, 1), (1, -1, 0), (2, -2, 0)\},\\
S_5 &= \{(-1, 0, 1), (0, -1, 1), (0, 0, 0), (1, -2, 1), (1, -1, 0), (2, -3, 1), (2, -2, 0)\},\\
S_6 &= \{(0, 0, 0), (1, -1, 0), (2, -2, 0)\},\\
S_7 &= \{(-2, 1, 1), (-1, 0, 1), (-1, 1, 0), (0, -1, 1), (0, 0, 0), (1, -2, 1), (1, -1, 0)\},\\
S_8 &= \{(-2, 1, 1), (-2, 2, 0), (-1, 0, 1), (-1, 1, 0), (0, -1, 1), (0, 0, 0), (1, -1, 0)\}.
\end{align*}

Let us define, for each $i$, the lengths $n$ for which the set $S_i$ is attained.
For example, the lengths corresponding to $S_1$, $S_2$, $S_3$, and $S_6$ would be defined as\\
\verb~def S1 "?msd_3 Ai ($vecn101(i,n) | $vecn110(i,n) | $vec000(i,n))~\\
\verb~                              & (Ej,k $vecn101(j,n) & $vecn110(k,n))";~\\
\verb~def S2 "?msd_3 Ai ($vecn101(i,n) | $vecn110(i,n) | $vec0n11(i,n)~\\
\verb~                                 | $vec000(i,n) | $vec1n10(i,n)) &~\\
\verb~         Ej,k,l,m ($vecn101(j,n) & $vecn110(k,n) & $vec0n11(l,n)~\\
\verb~            & $vec1n10(m,n))";~\\
\verb~def S3 "?msd_3 (Ai $vecn110(i,n) | $vec000(i,n) | $vec1n10(i,n)) &~\\
\verb~           (Ej,k $vecn110(j,n) & $vec1n10(k,n))";~\\
\verb~def S6 "?msd_3 (Ai $vec000(i,n) | $vec1n10(i,n) | $vec2n20(i,n)) &~\\
\verb~           (Ej,k $vec1n10(j,n) & $vec2n20(k,n))";~

To avoid memory issues, we split the definitions of $S_4$, $S_5$, $S_7$,  and $S_8$
into several parts:\\
\verb~def S4a "?msd_3 Ai $vecn101(i,n) | $vecn110(i,n) | $vec0n11(i,n) |~\\
\verb~       $vec000(i,n) | $vec1n21(i,n) | $vec1n10(i,n) | $vec2n20(i,n)";~
\verb~def S4b "?msd_3 $S4a(n) & Ei $vecn101(i,n)";~\\
\verb~def S4c "?msd_3 $S4b(n) & Ei $vecn110(i,n)";~\\
\verb~def S4d "?msd_3 $S4c(n) & Ei $vec0n11(i,n)";~\\
\verb~def S4e "?msd_3 $S4d(n) & Ei $vec1n21(i,n)";~\\
\verb~def S4f "?msd_3 $S4e(n) & Ei $vec1n10(i,n)";~\\
\verb~def S4 "?msd_3 $S4f(n) & Ei $vec2n20(i,n)";~\\
\newline
\verb~def S5a "?msd_3 Ai ($vecn101(i,n) | $vec0n11(i,n) | $vec000(i,n) |~\\
\verb~   $vec1n21(i,n) | $vec1n10(i,n) | $vec2n31(i,n) | $vec2n20(i,n))";~\\
\verb~def S5b "?msd_3 $S5a(n) & Ei $vecn101(i,n)";~\\
\verb~def S5c "?msd_3 $S5b(n) & Ei $vec0n11(i,n)";~\\
\verb~def S5d "?msd_3 $S5c(n) & Ei $vec1n21(i,n)";~\\
\verb~def S5e "?msd_3 $S5d(n) & Ei $vec1n10(i,n)";~\\
\verb~def S5f "?msd_3 $S5e(n) & Ei $vec2n31(i,n)";~\\
\verb~def S5 "?msd_3 $S5f(n) & Ei $vec2n20(i,n)";~\\
\newline
\verb~def S7a "?msd_3 Ai $vecn211(i,n) | $vecn101(i,n) | $vecn110(i,n) |~\\
\verb~     $vec0n11(i,n) | $vec000(i,n) | $vec1n21(i,n) | $vec1n10(i,n)";~\\
\verb~def S7b "?msd_3 $S7a(n) & Ei $vecn211(i,n)";~\\
\verb~def S7c "?msd_3 $S7b(n) & Ei $vecn101(i,n)";~\\
\verb~def S7d "?msd_3 $S7c(n) & Ei $vecn110(i,n)";~\\
\verb~def S7e "?msd_3 $S7d(n) & Ei $vec0n11(i,n)";~\\
\verb~def S7f "?msd_3 $S7e(n) & Ei $vec1n21(i,n)";~\\
\verb~def S7 "?msd_3 $S7f(n) & Ei $vec1n10(i,n)";~\\
\newline
\verb~def S8a "?msd_3 Ai $vecn211(i,n) | $vecn220(i,n) | $vecn101(i,n) |~\\
\verb~     $vecn110(i,n) | $vec0n11(i,n) | $vec000(i,n) | $vec1n10(i,n)";~\\
\verb~def S8b "?msd_3 $S8a(n) & Ei $vecn211(i,n)";~\\
\verb~def S8c "?msd_3 $S8b(n) & Ei $vecn220(i,n)";~\\
\verb~def S8d "?msd_3 $S8c(n) & Ei $vecn101(i,n)";~\\
\verb~def S8e "?msd_3 $S8d(n) & Ei $vecn110(i,n)";~\\
\verb~def S8f "?msd_3 $S8e(n) & Ei $vec0n11(i,n)";~\\
\verb~def S8 "?msd_3 $S8f(n) & Ei $vec1n10(i,n)";~

To obtain the abelian complexity function as an automatic sequence, we finally perform the following commands:\\
\verb~def abcomp1 "?msd_3 n=0";~\\
\verb~def abcomp3 "?msd_3 $S1(n) | $S3(n) | $S6(n)";~\\
\verb~def abcomp5 "?msd_3 $S2(n)";~\\
\verb~def abcomp7 "?msd_3 $S4(n) | $S5(n) | $S7(n) | $S8(n)";~\\
\verb~combine abcomp abcomp1=1 abcomp5=5 abcomp3=3 abcomp7=7;~\\
The first four automata recognize those lengths $n$ for which the abelian complexity equals $1$, $3$, $5$, and $7$, respectively.
The last combines these automata to form an automatic sequence over the alphabet
$\{1,3,5,7\}$. The automaton obtained is depicted in \cref{fig:abelian_complexity}.
\begin{figure}
\centering
\includegraphics[scale=.15,]{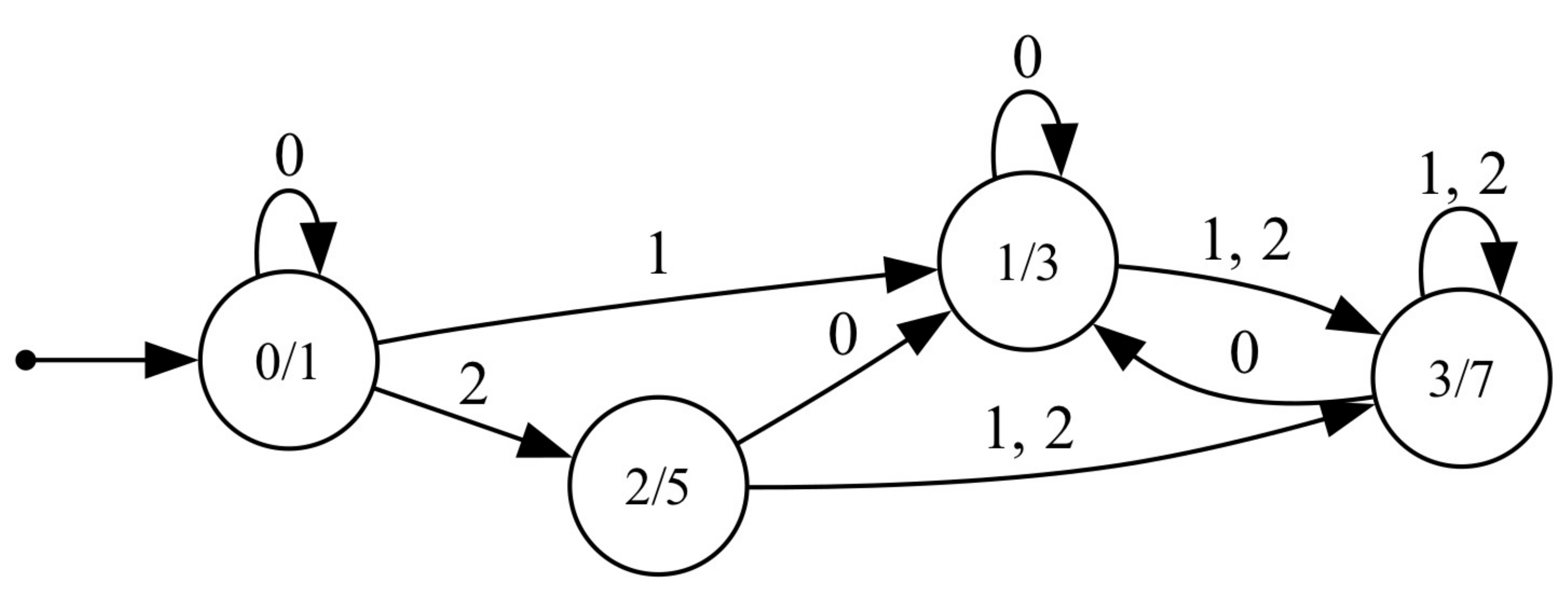}
\caption{The abelian complexity function of the fixed point $\infw{x}=012112002112002\cdots$ of the Parikh-collinear morphism $f\colon 0 \mapsto 012, 1\mapsto 112002,2 \mapsto \varepsilon$ as a $3$-automatic sequence.}
\label{fig:abelian_complexity}
\end{figure}
Inspecting the automaton, we see that the abelian complexity function of $\infw{x}$
equals $135(377)^{\omega}$,  as desired.

\section{Concluding remarks}\label{sec13}

We may address similar questions. In the same vein as Sportiello and Salo's question, we may ask: Is the abelian complexity of the fixed point of any {\em Parikh-constant morphism}, i.e., all images of letters have the same Parikh vector \cite{RigoSalimov2015}, always ultimately periodic? We know with \cite{RSW-DLT23} that this is not the case for an arbitrary Parikh-collinear morphism, but Parikh-constant morphisms are more restrictive: all columns of $M_f$ are the same.

In \cref{thm:more_gen}, there is an assumption about recognizability. Nevertheless, there are situations where recognizability does not hold and still, the cutting set is definable. As an example, consider the non-uniform morphism $f:a\mapsto ab$, $b\mapsto b'c'$, $b'\mapsto b$, $c'\mapsto ccc$ and $c\mapsto cc$. Its fixed point starting with $a$ is also $2$-automatic and generated by the morphism $a\mapsto ab$, $b\mapsto b'c'$, $b'\mapsto bc$, $c'\mapsto cc$, $c\mapsto cc$ (a slight modification of the morphism used to generate the characteristic sequence of powers of $2$). Because of the arbitrarily long blocks of $c$'s appearing in $f^\omega(a)$, $f$ is not recognizable on this infinite word. Nevertheless, the cutting set $\mathsf{CS}_{f,a}=\{0,2,4,5,8,10,\ldots\}$ is $2$-definable because it is easy to see that it is of the form  $(2\mathbb{N}\setminus\{4^n+2\mid n>0\})\cup \{4^n+1\mid n>0\}$.
So the conclusion of \cref{thm:more_gen} may hold for a larger class of morphic words (being simultaneously $k$-automatic for some~$k$). 

\backmatter

\bmhead{Acknowledgments}

We thank J.~Leroy for fruitful discussions on morphic words and pointing out useful references. 
We thank A.~Sportiello and V.~Salo for asking the question leading to this paper.
We warmly thank M.-P.~B\'eal,  F.~Durand, and D.~Perrin for sharing a draft of their book~\cite{BDP2024}.
We also thank the reviewers of~\cite{RSW-DLT23} for their suggestions.

\section*{Declarations}

\begin{itemize}
\item Funding \\
M.~Rigo is supported by the FNRS Research grant T.0196.23 (PDR).
M.~Stipulanti is an FNRS Research Associate supported by the Research grant 1.C.104.24F.
M.~Whiteland is supported by the FNRS Research grant 1.B.466.21F.
\item Conflict of interest/Competing interests \\
The authors have no relevant interests to disclose.
\item Ethics approval and consent to participate \\
Not applicable
\item Consent for publication \\
Not applicable
\item Data, materials,  and code availability \\ 
Not applicable
\item Author contribution \\
All authors equally contributed to the main content.
\end{itemize}

%
%
%
%
%
%
%
%
%

\bibliography{bibliography}

\end{document}